\documentclass[submission,copyright,creativecommons]{eptcs}
\usepackage{tikz}
\usetikzlibrary{intersections, calc, arrows, automata, positioning,shapes, trees}
\usepackage{bbding}
\usepackage{pifont}
\usepackage{pgf, verbatim}
\usepackage{amssymb}
\usepackage{amsmath,amsthm}
\usepackage{tikz}
\usetikzlibrary{bending,arrows.meta}
\usetikzlibrary{decorations.pathmorphing}
\usepackage{comment}
\usepackage{complexity}

\newtheorem{thm}{Theorem}
\newtheorem{expl}{Example}
\newtheorem{col}{Corollary}
\newtheorem{remark}{Remark}

\newcommand{\game}{{\mathcal{G}}}
\newcommand{\player}[1]{\mathsf{Player}(#1)}

\newcommand{\Ad}{{\mathsf{Adam}}}
\newcommand{\Aa}{{\mathcal{A}}}
\newcommand{\Ev}{{\mathsf{Eve}}}
\newcommand{\Gg}{{\mathcal G}} 
 
\newcommand{\nat}{{\mathbb N}} %
\newcommand{\F}{{\mathcal{F}}}

\newcommand{\sg}{\sigma}

\newcommand{\AtE}{\mathsf{Attr}_{\Ev}}

\newcommand{\genreach}{\mathsf{GenReach}}
\newcommand{\maxgenreach}{\mathsf{MaxGenReach}}

\newcommand{\maxgenreachpromise}{\mathsf{MaxGenReachPromise}}

\usepackage{tcolorbox}
\usepackage[capitalise]{cleveref}
\usepackage{lineno}
\usepackage{todonotes}
\usepackage{iftex}

\ifpdf
  \usepackage{underscore}         %
  \usepackage[T1]{fontenc}        %
\else
  \usepackage{breakurl}           %
\fi

\title{Generalised Reachability Games Revisited%
\thanks{This work was supported by the EPSRC through grants EP/Z003121/1, EP/X03688X/1, EP/X042596/1 and EP/V025848/1 and by the Fonds de la Recherche Scientifique – FNRS under Grant n° T.0188.23 (PDR ControlleRS).
}}
\author{Sougata Bose
\institute{UMONS - Universit\'e de Mons\\ Mons, Belgium}
\email{\quad sougata.bose@umons.ac.be}
\and
Daniel Hausmann
\institute{University of Liverpool\\ Liverpool, UK}
\email{\quad d.hausmann@liverpool.ac.uk}
\and 
Soumyajit Paul
\institute{University of Liverpool\\ Liverpool, UK}
\email{\quad soumyajit.paul@liverpool.ac.uk}
\and 
Sven Schewe
\institute{University of Liverpool\\ Liverpool, UK}
\email{\quad sven.schewe@liverpool.ac.uk}
\and
Tansholpan Zhanabekova
\institute{University of Liverpool\\ Liverpool, UK}
\email{\quad t.zhanabekova@liverpool.ac.uk }%
}

\begin{document}

\maketitle

\begin{abstract}
Classic reachability games on graphs are zero-sum games, where the goal of one player, $\Ev$, is to visit a vertex from a given target set, and that of other player, $\Ad$, is to prevent this. Generalised reachability games, studied by Fijalkow and Horn, are a generalisation of reachability objectives, where instead of a single target set, there is a family of target sets and $\Ev$ must visit all of them in any order. In this work, we further study the complexity of solving two-player games on graphs with generalised reachability objectives. Our results are twofold: first, we provide an improved complexity picture for generalised reachability games, expanding the known tractable class from games in which all target sets are singleton to additionally allowing a logarithmic number of target sets of arbitrary size. Second, we study optimisation variants of generalised reachability with a focus on the size of the target sets. For these problems, we show intractability for most interesting cases. Particularly, in contrast to the tractability in the classic variant for singleton target sets, the optimisation problem is \textsc{NP}-hard when $\Ev$ tries to maximise the number of singleton target sets that are visited. Tractability can be recovered in the optimisation setting when all target sets are singleton by requiring that $\Ev$ pledges a maximum sized subset of target sets that she can guarantee to visit.

\end{abstract}

\section{Introduction}

Two-player zero-sum games played on graphs provide a fundamental framework for modelling decision-making scenarios where two players have opposing objectives. They are extensively used to model reactive systems, where one player ($\Ev$) models actions controlled by the system and the other player ($\Ad$) models uncontrollable actions of the environment. Analysing such games then provides formal guarantees on the behaviours of the system against all possible behaviours of the environment~\cite{fijalkow2023gamesgraphs,pnueli1977, PR89, Bloem2018}.

These games are played on finite graphs, where the vertices are partitioned into ones controlled by $\Ev$ and $\Ad$. Starting from a token placed on a fixed initial vertex, the player controlling the current vertex moves the token along an edge of the graph to jointly form an infinite path. A winning objective specifies the set of acceptable behaviours of the system as a set of infinite paths that are good for $\Ev$. In the game, $\Ev$ attempts to ensure that the path formed is good for $\Ev$, while $\Ad$ tries to obstruct this goal.
Solving games refers to the decision problem of checking which of the two player can win from a given initial vertex.

A key type of objective studied in such settings is \emph{reachability}, where the goal of $\Ev$ is to reach some vertex among a designated target set of vertices. 
Generalised reachability games extend this concept by requiring $\Ev$ to reach multiple target sets rather than just one. Then, $\Ev$ is required to visit at least one vertex from each target set in the play.
However, in terms of computational complexity of solving such games, changing from reachability to generalised reachability results in a significant jump. While reachability games are P-complete \cite{Zer13,DBLP:journals/jcss/Immerman81}, generalised reachability games are \PSPACE-complete~\cite{fij-horn2010}. 
Understanding the complexity of solving special cases of generalised reachability games is crucial, especially when considering the size of target sets as a parameter. For instance, if all target sets consist of a single vertex, $\Ev$ is required to visit several vertices in the play. Such games can be solved in polynomial time~\cite{fij-horn2010}.

The analysis of generalised winning conditions, i.e., the conjunction of multiple objectives of the same kind has been a significant focus of research~\cite{chatterjee2007, chatterjee_et_al10, chatterjee_et_al16}.  
While generalised reachability games have been well studied, one could also consider the optimisation variant of the problem. This asks $\Ev$
to visit as many target sets as possible.
This variant can be used to provide strategies where visiting all target sets may not be possible, but $\Ev$ can still visit a significant number of target sets.
Optimisation variants have been studied beyond generalised reachability objectives~\cite{KupfermanS24}.
In this work, we consider both the case where $\Ev$ has to name the target sets she visits before the game starts (and visiting other sets does not count) \emph{and} the case where she just wants to maximise the number of sets visited.

\paragraph{Our Contributions.}
In this paper, we provide several new complexity results for generalised reachability games:

\begin{itemize}
    \item We prove that generalised reachability can be checked in time linear in (1) the size of the game, (2) the number of singleton target sets, and (3) exponential in the number of larger target sets.
    It is therefore fixed-parameter tractable (FPT) when considering the number of target sets larger than one as the parameter and polynomial if the number of large target sets is logarithmically bounded.
    \item We establish \NL-completeness for the single-player case, where only the environment makes decisions.
    \item We analyse optimisation variants of the problem, showing that \emph{maximising} the number of singleton target sets (or: target states) visited depends on whether we want to first name the visited target states or to just maximise the number of visited states; they are tractable and NP-complete, respectively.
\end{itemize}

By investigating these computational aspects, we provide a deeper understanding of the complexity landscape of generalised reachability games and contribute to the broader field of game theory and formal verification.

\paragraph*{Related Works.}

The starting point of this work is the study of \emph{generalised reachability games} by Fijalkow and Horn~\cite{fij-horn2010}.
Their work revealed the surprising complexity of these games: while reachability games are \textsc{P}-complete \cite{Zer13,DBLP:journals/jcss/Immerman81}, even modest extensions to standard reachability objectives can lead to significant computational challenges. 
This study has influenced the analysis of multi-objective games and strategy synthesis under complex constraints. 
This work also provides certain restrictions which make the problem easier, particularly by considering one-player variants, and by parametrising the problem by size of each target set.
They also exploit connections with the true quantified satisfiability (QSAT) problem, one of the standard \PSPACE-complete problems, to provide lower bounds, both in the general case and in some restricted cases. 
An open problem stated in their work is the complexity of the problem when target sets have size $2$. 

Games with weighted multiple objectives have been studied by Kupferman and Shenwald~\cite{KupfermanS24}, showing how adding different goals and weights makes these games more complex and useful for modelling real systems.
In particular, their results can be used to provide strategies that maximise the number of objectives satisfied, even if not all of the objectives can be jointly satisfied. This yields a \PSPACE~upper bound for some of the optimisation variants of generalised reachability games we consider. However, they do not study the problem by considering the size of target sets as a parameter.

In the realm of satisfiability, El Halaby~\cite{Halaby2016} investigated the \emph{computational complexity of MaxSAT}, an optimisation variant of the satisfiability problem. 
Prior to this, Kohli et al.~\cite{Kohli-et-al-94} introduced and studied the \emph{Minimum Satisfiability Problem (MinSAT)}, establishing its NP-hardness and discussing its relevance in fields such as fault diagnosis and design verification. However, the optimisation variant of QSAT, a natural \PSPACE-complete problem, is relatively unexplored. To the best of our knowledge, while some algorithms for solving MAX-QSAT are known~\cite{IgnatievJM13}, the computational complexity for restricted classes of the problem have not been studied. As several lower bound proofs for generalised reachability games are obtained by reduction from QSAT, we expect that the analysis of optimisation variants of generalised reachability games provides more insight into the optimisation variants of QBF and vice-versa.  
\section{Preliminaries}\label{sec:pre}
We use $\nat$ to denote the set of natural numbers. For $n \in \nat$, we use $[n]$ to denote the set $\{1,\dots,n\}$. We use $G = (V,E)$ to denote a directed graph with sets of vertices $V$ and edges $E \subseteq V \times V$. We often write $u \rightarrow v$ to denote $(u,v) \in E$. For $u \in V$, let $Succ(u) = \{v\in V \mid u \rightarrow v\}$. We assume familiarity with graph theoretic notions such as strongly connected components (SCC) and the directed acyclic graph formed by SCC decomposition of a graph.%

In this work we consider two-player turn-based games played between players $\Ad$ and $\Ev$. Such games are played on directed graphs called game arenas. Formally, a \emph{game arena} $\Aa = (G, V_{\Ev}, V_{\Ad})$ is composed of a finite directed graph $G = (V, E)$ and a partition $(V_{\Ev}, V_{\Ad})$ of the vertex set $V$. A vertex in $V_{\Ev}$ (respectively $V_{\Ad}$) is controlled by $\Ev$ (respectively $\Ad$). For $v \in V$, let $\player{v}$ be the player controlling $v$ i.e. $\player{v} = p$ when $v \in V_p$ for $p \in \{\Ad, \Ev\}$.

A \emph{generalised reachability game} is a tuple $\game = (\Aa, s,\F)$, where
\begin{itemize}
\item $\Aa$ is a game arena
\item $s \in V$ is the start vertex
\item $\F = \{F_1, F_2, \dots, F_n\}$ is a set of $n$ target sets, where for each $i$, we have $F_i \subseteq V$.
\end{itemize}

\cref{fig:maxproblem} is an example of a generalised reachability game where circle nodes belong to $\Ev$ and square nodes belong to $\Ad$. $s$ is the start vertex. In this particular game, all target sets are singleton and are marked with doubled circles, i.e. $\F = \{\{u_1\},\{u_2\},\{u_3\},\{u_4\},\{v\}\}$.

The rules of a generalised reachability game are as follows: initially, a token is placed on the start vertex $s$. At each step, when the token is on vertex $u$, it is $\player{u}$'s turn to play: $\player{u}$ chooses a vertex $v$ from $Succ(u)$ and moves the token to $v$. The sequence of vertices starting with $s$, that is obtained this way, is called a \emph{play}. 
The objective of $\Ev$ is to move the token in a way, such that the token visits some vertex from every target set $F_i \in \F$.
$\Ad$'s goal is to prevent $\Ev$ from achieving her objective.
In order to achieve their respective goals, players can move tokens according to some \emph{strategy}.
When the token is on vertex $u$, $\player{u}$ moves the token based on the past history as described by the strategy. 
Formally,  a \emph{strategy} for player $p \in \{\Ad,\Ev\}$ is a function $ \sg_p : V^* \cdot V_{p} \to V $, such that for all $\pi=v_0 v_1 \ldots v_k\in V^* \cdot V_{p}$, we have $\sg_p(\pi)\in Succ(v_k)$.
Thus a strategy prescribes a valid move that should be taken when it is the respective player's turn.
A pair of strategies \( (\sigma_{\text{Eve}}, \sigma_{\text{Adam}}) \) induces a unique infinite play $\pi \in V^{\omega}$.

In principle, the game can continue for an infinite duration but for generalised reachability objectives, player $\Ev$ has 
to achieve all her goals in a finite number of game steps.
Given a generalised reachability objective $\F = \{F_1, F_2, \dots, F_n\}$,
a play $\pi=v_0 v_1\ldots v_k$ is \emph{winning} for $\Ev$ if it visits each target set at least once, that is,
if for all $1\leq i\leq n$, there is $0\leq j\leq k$ such that $v_j\in F_i$; otherwise, $\pi$ is won by $\Ad$.

Our primary interest in this paper is the complexity of deciding whether $\Ev$ has a strategy to achieve the generalised reachability objective, starting
from $s$. The decision problem is as follows:

\begin{tcolorbox}[colback=gray!5!white,colframe=black!75!black,arc=0pt,outer arc=0pt]
$\boldsymbol{\genreach}$: Given a game $\game = (\Aa, s, \F)$, does $\Ev$ have a strategy to visit all $F_i$ in $\F$ starting from $s$?
\end{tcolorbox}

\begin{thm}[\cite{fij-horn2010}]
$\genreach$ is \PSPACE-complete. The \PSPACE-hardness holds even when $|F_i| = 3$ for each $F_i \in \F$. $\genreach$ is in $\P$ when $|F_i| = 1$ for each $F_i \in \F$.
\end{thm}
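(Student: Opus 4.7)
My plan is to prove three separate claims: membership in $\PSPACE$, $\PSPACE$-hardness already when every target set has size three, and tractability when every target set is a singleton.

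For the $\PSPACE$ upper bound, I would describe an alternating polynomial-time procedure on configurations $(v, T) \in V \times 2^{[n]}$, where $T$ records the indices of target sets not yet visited. A configuration fits in $O(\log|V| + n)$ bits. Starting from $(s, [n])$, at a vertex $v \in V_{\Ev}$ the machine existentially guesses a successor, at $v \in V_{\Ad}$ it universally branches over all successors, and $i$ is removed from $T$ whenever $v \in F_i$; accept when $T = \emptyset$. A step counter bounded by $|V| \cdot 2^n$, still representable in polynomial space, rules out infinite non-winning plays. Since $\mathsf{APTIME} = \PSPACE$ this yields the upper bound.

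For $\PSPACE$-hardness with sets of size three I would reduce from the truth of a quantified Boolean formula $Q_1 x_1 \cdots Q_m x_m\, \bigwedge_{j=1}^k C_j$ in $3$-CNF. The arena walks through the variables in order; at the gadget for $x_i$ the player matching $Q_i$ picks between two branches encoding $x_i = 0$ and $x_i = 1$. For each literal occurrence of $x_i$ in a clause $C_j$, a dedicated vertex sits on the branch that makes that literal true, and the three literal-vertices of $C_j$ together form the target set $F_j$, which has size exactly three. A play visits $F_j$ iff the resulting assignment satisfies some literal of $C_j$, so $\Ev$ wins the constructed generalised reachability game iff the QBF is true.

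For the polynomial algorithm when $F_i = \{t_i\}$, the plan is to avoid the $2^n$ blow-up by exploiting that each target is a specific vertex. I would first compute the Eve attractors $\AtE(\{t_i\})$ for every $i$, and then argue that $\Ev$ wins iff she can force the play through the $t_i$ in an order consistent with a polynomially computable structure, for example a dependency digraph on targets derived from attractor membership and SCC decomposition of the arena. The correctness proof has to show that the exponential product $V \times 2^{[n]}$ collapses to polynomially many essential configurations because $\Ev$'s optimal behaviour need not branch on the ``next target to aim at'' in a fine history-dependent way. This last step is the main obstacle: the $\PSPACE$ upper bound and the QBF reduction are fairly routine, but singleton tractability demands a subtle structural argument reconciling Eve's flexibility in choosing an order with Adam's ability to steer the play away from any particular target.
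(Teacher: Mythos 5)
Your hardness argument is fine and is essentially the Fijalkow--Horn reduction that the paper itself recalls (a chain of variable gadgets owned by the quantifying player, with $F_j$ equal to the set of literal vertices of clause $C_j$, so $|F_j|=3$ for a $3$-CNF matrix). The $\PSPACE$ upper bound, however, does not go through as written: you announce an \emph{alternating polynomial-time} procedure but then let the simulation run for up to $|V|\cdot 2^n$ steps. An alternating machine that uses polynomial space but exponential time only gives an $\mathsf{EXPTIME}$ bound (alternating polynomial space equals $\mathsf{EXPTIME}$), so the appeal to $\mathsf{APTIME}=\PSPACE$ is inconsistent with your step counter. The repair is standard but must be said: in the product arena $V\times 2^{[n]}$ Eve's objective is plain reachability of the configurations with empty memory, and an attractor (rank-decreasing) strategy never repeats a configuration while the memory component is unchanged; since the memory component can only shrink, and it shrinks at most $n$ times, a winning Eve wins within roughly $(n+1)\cdot|V|$ steps. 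With that polynomial cutoff the alternating machine really is polynomial time and the upper bound follows.

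The substantive gap is the singleton case, which you explicitly leave open. The missing idea is a structural characterisation in terms of attractor inclusion rather than any SCC-based ``dependency digraph'': setting $A_i=\AtE(\{t_i\})$, Eve wins iff the sets $A_i$ are pairwise comparable under inclusion (they form a total preorder) and $s$ lies in the minimal one. Sufficiency is by chaining attractor strategies along the inclusion order, visiting the targets from the smallest attractor outwards. Necessity is the crucial point your sketch lacks: if $A_i$ and $A_j$ are incomparable, Adam plays arbitrarily until the first of $t_i,t_j$ is visited, say $t_i$; since $t_i\notin A_j$ (otherwise $A_i\subseteq A_j$), the complement of $A_j$ is a trap for Adam from which he has a safety strategy never entering $A_j$, so $t_j$ is never reached. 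This is exactly the argument the present paper generalises in \cref{thm:genReachSingletons} (all target sets singleton except one), where the same total-preorder criterion, augmented with the condition $t_{i+1}\in\AtE(A_i\cap F_0)$, is proved correct; without this lemma your plan for the polynomial-time case does not yet constitute a proof.
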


In this work, we particularly focus on the complexity of $\genreach$ parametrised by the size of target sets. This includes cases where some target sets may have size $1$. In this case, we simplify the notation for convenience. The singleton target sets are given by a set $T = \{t_1,\dots,t_m\} \subseteq V$. For such games, we write $\game = (\Aa,s,\F,T)$ where $|F_i| > 1$ for each $F_i \in \F$. In terms of the previous formulation, the set of targets is now $\F \cup \{\{t\}\mid t \in T\}$.

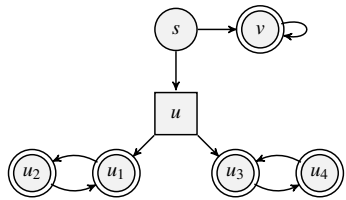
\begin{figure}[h!]
 \centering
\scalebox{0.7}{
\begin{tikzpicture}[shorten >=0.7pt, node distance=1.6cm, on grid, auto, thick, >=stealth']
   \tikzstyle{state}=[circle, draw, fill=gray!10, minimum size=8mm]
   \tikzstyle{box}=[rectangle, draw, fill=gray!10, minimum size=8mm]

   \node[state] (s) [yshift=-0.8cm] {$s$};
   \node[box] (u) [below =of s] {$u$};
   \node[state,double, double distance=0.7mm] (u1) [below left=of u] {$u_1$};
   \node[state,double, double distance=0.7mm] (u2) [left=of u1] {$u_2$};
   \node[state,double, double distance=0.7mm] (u3) [below right=of u] {$u_3$};
   \node[state,double, double distance=0.7mm] (u4) [right=of u3] {$u_4$};
   \node[state,double, double distance=0.7mm] (v) [right=of s] {$v$};

   \path[->]

   (s) edge (u)
   (u) edge (u1)
   (u) edge (u3)
   (s) edge (v)
   (u1) edge [bend right=30] (u2)
   (u2) edge [bend right=30] (u1)
   (u3) edge [bend right=30] (u4)
   (u4) edge [bend right=30] (u3)
   (v) edge [loop right] (v);;
\end{tikzpicture}
}
\caption{Generalised reachability game with all singleton targets, $T = \{u_1,u_2,u_3,u_4,v\}$}
\label{fig:maxproblem}
\end{figure}
In this paper, we also consider optimisation versions of the problem. In the game in \cref{fig:maxproblem}, all  targets are singleton with $T = \{u_1,u_2,u_3,u_4,v\}$ and $\Ev$ does not have a winning strategy from $s$.  However, at $s$ if she chooses to move to $u$, no matter what $\Ad$ chooses at $u$, at least two target sets will be visited. This is the best $\Ev$ can do, since on choosing $v$, she can only visit one target set. We consider the relevant decision problem, where $\Ev$'s goal is to maximise the number of target sets she can visit defined as follows: 

\begin{tcolorbox}[colback=gray!5!white,colframe=black!75!black,arc=0pt,outer arc=0pt]
$\boldsymbol{\maxgenreach}$: Given a game $\game$ and $ k \in \nat$, does $\Ev$ have a strategy to visit at least $k$ of the sets from $\F$ starting from $s$?
\end{tcolorbox}

In \cref{fig:maxproblem}, $\Ev$ can force the play to visit at least two target sets, but she cannot force to visit two specific target sets, since $\Ad$ makes the decision at $u$. On the other hand, $\Ev$ has a strategy to ensure that $v$ is always visited. %
We consider the optimisation variant, where the goal of $\Ev$ is to find the biggest collection of target sets, such that she can visit each of them in the collection.
We consider the relevant decision problem defined as follows: 
\begin{tcolorbox}[colback=gray!5!white,colframe=black!75!black,arc=0pt,outer arc=0pt]
$\boldsymbol{\maxgenreachpromise}$: Given a game $\game$, and $k\in \nat$, is there a set $\F' \subseteq \F$, with $|\F'| \geq k$  such that $\Ev$ has a strategy to visit all $F_i$ in $\F'$ starting from $s$?
\end{tcolorbox}

For the one-player variants of the game,  where only $\Ev$ plays%
, the  problems $\maxgenreach$
and $\maxgenreachpromise$ are equivalent. But this need not be true in general, even for the one-player variant with $\Ad$ as the only player. To see this, consider a game obtained by modifying the game in \cref{fig:maxproblem}, where all vertices in the game belong to $\Ad$. The maximum number of target sets $\Ev$ can ensure for $\maxgenreach$ is $1$, whereas for the $\maxgenreachpromise$, the maximum is $0$, since there is not even one target in $T$ which $\Ev$ can force to visit. 
$\Ad$ visits $v$ if $v$ is not promised by $\Ev$, otherwise goes to $u$ followed by a loop, say $u_1\to u_2$.

Reachability objectives require $\Ev$ to be able to enforce the play to enter target vertices, irrespective of what $\Ad$ plays. In this regard, the notion of \emph{attractor} is natural~\cite{Zer13,ZIELONKA1998135}.
For a set of (target) vertices $S$, the \emph{attractor} of $S$ is the set of all starting vertices $u$ such that $\Ev$ has strategy to reach some vertex in $S$ when the game starts at $u$. Formally, the attractor of $S$, denoted by $\AtE(S)$ can be defined recursively as follows. Here $\AtE^i(S)$ is the set of all starting vertices, from where $\Ev$ can force the play to enter $S$ within $i$ steps.
\begin{align*}
\AtE^0(S) &= S \\
\AtE^{i+1}(S) &= \AtE^i(S) ~ \cup~ \{u \in V_{\Ev} \mid \exists v \in Succ(u),  v \in \AtE^i(S) \} \\ &~~~~ \cup~  \{u \in V_{\Ad} \mid \forall v \in Succ(u),v \in \AtE^i(S) \} \\
\AtE(S) &= \bigcup_i \AtE^i(S)
\end{align*}

We point out that attractor sets $\AtE(S)$ can be computed in time linear in the number of edges of the underlying graph $(V,E)$, that is, in time $\mathcal{O}(|E|)$, noting $|E|\leq |V|^2$.
For each vertex $v\in\AtE(S)$, player $\Ev$ can enforce that $S$ is visited when playing from $v$: for $v\in V$, let $i_v$ denote the least 
number such that $v\in \AtE^{i_v}(S)$. Then $\Ev$ intuitively can ensure that $S$ is reached from $v$ in at most $i_v$ steps. A witnessing strategy is obtained by
moving from $v$ to some $v'\in\AtE^{i_v-1}(S)\cap Succ(v)$, that is, by moving one step closer to $S$.

\section{Solving Generalised Reachability Games}
We first establish \PSPACE-hardness of solving two-player generalised reachability games even when the underlying graph is a directed acyclic graph (DAG) with pathwidth 2.
To this end, we recall the hardness reduction of Fijalkow and Horn \cite{fij-horn2010}
that reduces the \PSPACE-complete TQBF problem to arbitrary generalised reachability games.
The input of the reduction is a quantified Boolean formula 
\[\phi=\forall x_1.\,\exists x_2. \forall x_3.\ldots. \forall x_n.\exists y_n.\, c_1\land\ldots\land c_n\]
where the $c_i$ are sets of (possibly negated) literals, indicating disjunctive clauses. The reduced game is the generalised reachability game $G_\phi=(\mathcal{A}=(V,E),1,\{F_j~\mid ~ 1\leq j\leq n )$,
where for $i<n$ and $1\leq j\leq n$.
\begin{align*}
V&=\{i,x_i,\neg x_i\mid 1\leq i\leq n\} \cup\{\bot\} &
Succ(\bot)&=\{\bot\} & Succ(x_n)=Succ(\neg x_n)&=\{\bot\}\\
Succ(i)&=\{x_i,\neg x_i\} & Succ(x_i)&=Succ(\neg x_i)=\{i+1\} & F_j=c_j
\end{align*}

The partition of $V$ into $(V_\Ev,V_\Ad)$ is such that $i \in V_\Ad$ if $x_i$ is universally quantified, while $i \in V_\Ev$ if $x_i$ is existentially quantified. (The remaining vertices have a single successor, so that it does not matter if they are Adam's or Eve's vertices.)

\begin{thm}\cite{fij-horn2010}
Player $\Ev$ wins the game $G_\phi$ iff $\phi$ is true.
\end{thm}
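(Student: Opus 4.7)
The plan is to exploit the tight structural correspondence between plays in $G_\phi$ and truth assignments to $x_1,\ldots,x_n$. Every play from vertex $1$ has the shape $1 \to \ell_1 \to 2 \to \ell_2 \to \cdots \to n \to \ell_n \to \bot^\omega$, where each $\ell_i \in \{x_i,\neg x_i\}$ is a literal vertex. I would read such a play as the truth assignment $\alpha_\pi$ that sets $x_i$ to true iff $\ell_i = x_i$. The crucial observation is that a clause $c_j$ is satisfied by $\alpha_\pi$ iff some literal of $c_j$ is chosen along $\pi$, iff some vertex of $F_j = c_j$ is visited by $\pi$. Hence a play is winning for $\Ev$ exactly when its induced assignment satisfies the matrix $c_1 \land \cdots \land c_n$.

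Next I would match the control structure. Vertex $i$ is owned by the player whose quantifier binds $x_i$: $\Ad$ controls universal positions, $\Ev$ controls existential ones, while every literal vertex and $\bot$ has a forced successor. Thus the game tree rooted at $1$ is isomorphic to the evaluation tree of $\phi$, with the right player making each quantifier decision.

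For the forward direction I assume $\phi$ is true and take a winning Skolem strategy $f$ for the existential quantifiers, viewed as functions of the preceding universal choices. I would define $\sigma_\Ev$ to play, at each existential vertex $i$ reached after literal choices $\ell_1,\ldots,\ell_{i-1}$, the literal corresponding to $f$'s choice on the matching partial assignment. Any resulting play $\pi$ then has $\alpha_\pi$ equal to an assignment consistent with $f$ against some universal play, hence satisfying the matrix; by the observation above, $\pi$ visits every $F_j$, so $\sigma_\Ev$ wins.

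For the converse, I would contrapose: if $\phi$ is false, pick a Herbrand-style winning strategy $g$ for the universal quantifiers and translate it to $\sigma_\Ad$ in exactly the symmetric way. Every resulting play induces an assignment that falsifies $c_1 \land \cdots \land c_n$, so at least one clause $c_j$ has no true literal, and the corresponding target set $F_j$ is never visited, so $\Ad$ wins. Since generalised reachability games are determined (positions are won by exactly one player), this shows $\Ev$ does not win $G_\phi$.

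The proof is mostly bookkeeping; there is no real obstacle, only the need to make the bijection between partial plays at the $i$-th literal vertex and partial assignments $\alpha \colon \{x_1,\ldots,x_i\} \to \{0,1\}$ precise enough that the strategy translation in both directions is well-defined.
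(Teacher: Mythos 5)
Your proof is correct and is exactly the intended argument: the paper states this theorem as a citation to Fijalkow--Horn without reproving it, and their proof is the same correspondence you describe between plays and assignments, quantifier ownership and player control, and Skolem/Herbrand strategies for the two directions (your appeal to determinacy in the converse is not even needed, since a winning $\Ad$ strategy directly precludes a winning $\Ev$ strategy). No gaps.
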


\begin{expl}
For an example of the reduction, consider the following QBF formula:
\[\phi_1=
\forall x. \, \exists y. \, \forall z. \, \exists u. \, \big( (\neg x \lor \neg y\lor u) \land (x \lor \neg z) \land (\neg z \lor y) \big).
\]
The reduced generalised reachability game $G_{\phi_1}$ is shown in Figure~\ref{fig:qbf}. %
$\Ev$ takes care of existential quantification in $\phi_1$, and $\Ad$ of universal quantification. The clauses correspond to the sets \( F_i \); in this example, we have 
\[
F_1 = \{\neg x, \neg y, u\}, \quad F_2 = \{x, \neg z\}, \quad F_3 = \{\neg z, y\}.
\]

\begin{figure}[h!]\label{fig:qbf}
 \centering
\scalebox{0.8}{
\begin{tikzpicture}[shorten >=0.7pt, node distance=1.6cm, on grid, auto, thick, >=stealth']
   \tikzstyle{state}=[circle, draw, fill=gray!10, minimum size=8mm]
   \tikzstyle{box}=[rectangle, draw, fill=gray!10, minimum size=8mm]

   \node[box] (start) {};
   \node[state] (x) [above right=of start, xshift=1cm] {$x$};
   \node[state] (notx) [below right=of start, xshift=1cm] {$\neg x$};

   \node[state] (mid1) [right=of start, xshift=2.5cm] {};

   \node[state] (y) [above right=of mid1, xshift=1cm] {$y$};
   \node[state] (noty) [below right=of mid1, xshift=1cm] {$\neg y$};

   \node[box] (mid2) [right=of mid1, xshift=2.5cm] {};

   \node[state] (z) [above right=of mid2, xshift=1cm] {$z$};
   \node[state] (notz) [below right=of mid2, xshift=1cm] {$\neg z$};

   \node[state] (mid3) [right=of mid2, xshift=2.5cm] {};

   \node[state] (u) [above right=of mid3, xshift=1cm] {$u$};
   \node[state] (notu) [below right=of mid3, xshift=1cm] {$\neg u$};

   \node[state] (end) [right=of mid3, xshift=2.5cm] {};

   \path[->]
   (start) edge (x)
   (start) edge (notx)
   (x) edge (mid1)
   (notx) edge (mid1)
   (mid1) edge (y)
   (mid1) edge (noty)
   (y) edge (mid2)
   (noty) edge (mid2)
   (mid2) edge (z)
   (mid2) edge (notz)
   (z) edge (mid3)
   (notz) edge (mid3)
   (mid3) edge (u)
   (mid3) edge (notu)
   (u) edge (end)
   (notu) edge (end)
   (end) edge [loop right] (end);
\end{tikzpicture}
}
 \caption{The generalised reachability game for the formula $\phi_1=
\forall x. \, \exists y. \, \forall z. \, \exists u. \, \big( (\neg x \lor \neg y\lor u) \land (x \lor \neg z) \land (\neg z \lor y) \big)$.}
    
\end{figure}

\end{expl}

\subsection{Generalised Reachability with All but One Targets Singleton}
Next we show that the solution of generalised reachability games becomes tractable
when the sizes of the individual targets sets are sufficiently restricted.
We first consider the version with total $k+1$ target sets of which $k$ are singleton targets $F_1=\{t_1\}, \dots, F_k=\{t_k\}$, and one target set $F_0$ has more than one element (that is, $|F_0| > 1$). 
Let $T = \{t_1,\dots,t_k\}$ and $\F = \{F_0\}$ and consider generalised reachability games $\mathcal{G} = (\mathcal{A}, s, \F, T)$.

\begin{thm}\label{thm:genReachSingletons}
Let $\game = (\Aa,s,\F=\{F_0\},T=\{t_1,\dots,t_k\})$ be a game. Then, $\genreach$ is in $\P$.
\end{thm}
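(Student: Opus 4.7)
The plan is to reduce the problem to the all-singleton case, for which a polynomial-time algorithm is known by the preceding result of Fijalkow and Horn.

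My first step would be a product construction: form the arena $\Aa^* = \Aa \times \{0,1\}$ of size $2|V|$, where the extra bit tracks whether a vertex of $F_0$ has been visited, with transitions flipping the bit exactly when the successor lies in $F_0$. In this product, Eve's objective becomes to visit each singleton $t_i$ (in either bit state) and to reach the bit-$1$ region. I would then absorb ``reach bit $1$'' into a fresh singleton target $f^*$ by inserting $f^*$ on every bit-flipping transition. To preserve game semantics at $f^*$ when $|F_0|>1$ (avoiding spurious teleportation between $F_0$-vertices), I would augment the construction with an $O(|F_0|)$-sized auxiliary component that remembers which $F_0$-vertex was entered, keeping the overall arena polynomial in $|V|$ and $|F_0|$.

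The main obstacle is that each lifted singleton target $\{(t_i,0),(t_i,1)\}$ is technically a two-element set in the product, so the Fijalkow--Horn algorithm does not apply directly. Resolving this requires a structural argument: the two copies $(t_i,0)$ and $(t_i,1)$ represent the same underlying vertex of $\Aa$ and differ only in a bit that evolves monotonically from $0$ to $1$. Combined with the monotonicity of singleton winning sets under target removal (winning with fewer remaining targets is easier), one should be able to show that the product game is equivalent to a genuine all-singleton generalised reachability instance on the expanded polynomial-size arena.

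Once the equivalence is in place, applying the Fijalkow--Horn algorithm to the reduced arena yields the polynomial-time upper bound. The crux of the proof is therefore the equivalence itself: one must verify carefully that the $f^*$ gadget (with its auxiliary memory) faithfully encodes ``$F_0$-visited'' and that the collapse of the two-element targets back into true singletons does not change Eve's winning status at $s$, neither granting her new strategies in the reduced game nor invalidating existing ones from the original.
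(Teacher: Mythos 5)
There is a genuine gap, and it sits exactly at the point you flag as ``the crux''. Your reduction never actually produces an all-singleton instance. First, the $f^*$ gadget: a single vertex $f^*$ inserted on every bit-flipping transition changes the game, because whoever owns $f^*$ gets to choose which vertex of $F_0$ the play continues from (if $\Ev$ owns it she can redirect $\Ad$'s entry into $F_0$ to a more favourable $F_0$-vertex, if $\Ad$ owns it he can redirect $\Ev$'s entry to a worse one); your proposed fix---an auxiliary component that remembers which $F_0$-vertex was entered---necessarily consists of one vertex per element of $F_0$, so the target ``$F_0$ has been visited'' becomes again a set of size $|F_0|$, not a singleton. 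Second, the lifted targets $\{(t_i,0),(t_i,1)\}$ have size two, and the argument that they can be collapsed to true singletons is only asserted (``one should be able to show''); no construction or proof is given. Monotonicity of the bit does not obviously help: whether $\Ev$ should pass through $t_i$ before or after visiting $F_0$ is precisely the combinatorial content of the problem, and fixing a bit value per target amounts to guessing where in the visiting order the detour to $F_0$ occurs. Note also that even if the collapse failed and you settled for targets of size at most two, you could not invoke any known polynomial algorithm: the paper emphasises that the complexity of generalised reachability with target sets of size two is open, so landing in that class proves nothing.

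By contrast, the paper's proof does not reduce to the singleton case at all; it gives a direct characterisation: compute $A_i=\AtE(\{t_i\})$, check that these attractors form a chain $A_k\subseteq\cdots\subseteq A_1$ containing $s$ in the minimal one, and check that for some index $i$ the detour is insertable, i.e.\ $t_{i+1}\in\AtE(A_i\cap F_0)$, with explicit winning strategies for $\Ev$ (follow the attractor chain, detouring through $A_i\cap F_0$ at the right point) and for $\Ad$ (a safety strategy avoiding some attractor, or avoiding all sets $A_i\cap F_0$). Your sketch has no substitute for this condition on where the visit to $F_0$ can be interleaved with the chain of singleton targets; without it, the claimed equivalence between the product game and an all-singleton instance is unsupported, and I do not see how to establish it short of redoing essentially the paper's attractor analysis.
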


\begin{proof}

Let $\mathcal{G} = (\mathcal{A}, s, \F, T)$ be a generalised reachability game
in which all target sets except $F_0$ are singleton sets.

Let $A_1,\ldots,A_k$ denote the attractor sets to the singleton targets sets
$T=(t_1,\ldots,t_k)$, that is let $A_i=\AtE(\{t_i\})$ for $1\leq i \leq k$.
We point out that $t_i\in A_i$.
We claim that $\Ev$ wins the game $\mathcal{G}$ from $s$ if and only if
\begin{enumerate}
\item the sets $A_i$ form a total preorder under set inclusion ; we will assume w.l.o.g.\ that $A_i \subseteq A_{i-1}$ holds;
\item the minimal attractor set contains $s$; and 
\item there is some $0\leq i \leq k$ such that $t_{i+1}\in \AtE(A_{i} \cap F_0)$.
\end{enumerate}

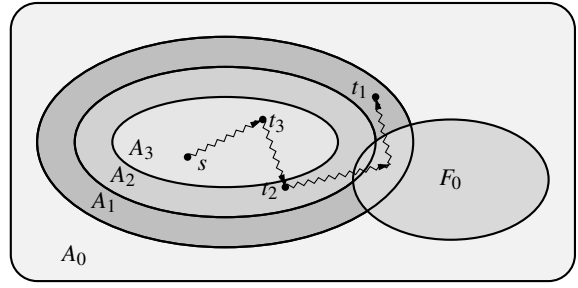
\begin{figure}[h!]
 \centering

\begin{tikzpicture}[>={[inset=0,angle'=27]Stealth}, SharpSquiggly/.style={
            decorate,
            decoration={zigzag, segment length=4, amplitude=0.9},
        }]
\draw[thick,fill=gray!10,rounded corners=10,shift={(0.15,0.15)}]
    (-3,-2) rectangle (4.5,1.7);
\draw [thick,fill=gray!50](0,0) ellipse (2.5cm and 1.4cm);
\draw [thick,fill=gray!35](0,0) ellipse (2cm and 1cm);
\draw [thick,fill=gray!20](0,0) ellipse (1.5cm and .6cm);
\draw [thick,fill=gray!30](3,-0.5) ellipse (1.3cm and 0.8cm);
\draw [thick](0,0) ellipse (2cm and 1cm);
\draw [thick](0,0) ellipse (2.5cm and 1.4cm);
    
\node at (-1.1, -.1){\footnotesize{$A_3$}};
\node at (-1.35, -0.45){\footnotesize{$A_{2}$}};
\node at (-1.6, -0.8){\footnotesize{$A_{1}$}};
\node at (-2, -1.5){\footnotesize{$A_{0}$}};
\node at (3, -.5){\footnotesize{$F_{0}$}};
\node at (-.5,-.2) [circle,fill,inner sep=1pt]{};
\node at (-.3,-.3){\footnotesize{$s$}};
\node at (.5,.3) [circle,fill,inner sep=1pt]{};
\draw[->]  (-.5,-.2) decorate[SharpSquiggly]{ -- (.5,.3) } ;
\node at (.7,.25){\footnotesize{$t_3$}};
\draw[->]  (.5,.3) decorate[SharpSquiggly]{ -- (.8,-.6) } ;
\node at (.8,-.6) [circle,fill,inner sep=1pt]{};
\node at (.6,-.65){\footnotesize{$t_2$}};
\draw[->]  (.8,-.6) decorate[SharpSquiggly]{ -- (2.2,-.3) } ;
\node at (2,.6) [circle,fill,inner sep=1pt]{};
\node at (1.8,.7){\footnotesize{$t_1$}};
\draw[->]  (2.2,-.3) decorate[SharpSquiggly]{ -- (2,.6) } ;
\end{tikzpicture}
\caption{Example construction of winning strategy from attractor sets, $k=4$}
\label{fig:genreach}
\end{figure}

The attractor computations and the check whether they form a total preorder can be implemented in polynomial time. The same holds
for the check whether the minimal attractor set contains $s$ and whether one of the target states $t_{i+1}$ is contained in the attractor to $A_i\cap F_0$. 
Hence the Theorem follows from the claim.

For one direction of the claim, suppose that the attractors are indeed comparable, that is, that for every pair of attractors $A_i$ and $A_j$, we have either $A_i \subseteq A_j$ or $A_j \subseteq A_i$;
also assume that there is some $i \in \{0,\ldots,k\}$ such that $t_{i+1}$ is contained in the attractor to $A_{i} \cap F_0$.
As the order in which the individual targets in a generalised reachabiliy objective are visited is irrelevant, we can reorder the target sets. Without loss of generality, we assume 
\ $A_k\subseteq \ldots \subseteq A_1$ for simplicity. Let the initial state in $\mathcal{G}$ be denoted by $t_{k+1}=s$ and assume $t_{k+1}\in A_k$.
Additionally, let $A_0$ denote the set of all states.
Figure~\ref{fig:genreach} shows an example of this arrangement, where $s=t_4$ and we assume that $i=1$, that is, that $t_2$ is contained in $\AtE(A_1\cap F_0)$.

For $\Ev$ to win the generalised reachability objective from $s$, it is necessary that $t_{j+1} \in A_j$ 
for each $j \in \{1,\ldots,k\}$. This however follows from $A_k\subseteq \ldots \subseteq A_1$, which ensures that each subsequent target state belongs to the corresponding attractor, enabling visits to all target states.

In more detail, we construct a winning strategy as follows:

1. For each $j \in \{1,\ldots, k\}$ with $j\neq i+1$: move from $t_{j+1}$ to $t_j$. Since $t_{j+1}$ is in the attractor $A_j$, player $\Ev$ can enforce that $t_j$ is eventually reached.

2. For the state $t_{i+1}$: move to a state from the set $A_{i} \cap F_0$; this is possible since $t_{i+1}\in \AtE(A_{i} \cap F_0)$. If $i = 0$,  then the strategy terminates successfully. Otherwise, proceed by moving from the state from $A_{i} \cap F_0$ on to $t_{i}$ (which is possible since that state is contained in $A_{i}$), and finish the sequence from $t_i$.

This strategy ensures that each move remains within the attractors, ultimately leading to satisfaction of the objective by visiting all target sets, including $F_0$.

For the converse direction, assume that the attractors do not form a total preorder, that $s$ is not contained in the minimal attractor set, 
or that we have $t_{i+1}\notin \AtE(A_{i} \cap F_0)$ for all $i$.

If the attractors do not form a total preorder, then there exist indices 
$i,j$ such that neither $A_i \subseteq A_j$ nor $A_j \subseteq A_i$ holds.

Define a strategy for player $\Ad$ as follows. Play arbitrarily until $t_i$ or $t_j$ is reached. Assume without loss of generality that we arrive at $t_i$ first. 
Since $t_i \notin A_j$ (otherwise, it would imply $A_i \subseteq A_j$, contradicting our assumption that the attractors do not form a total preorder), player $\Ev$ cannot attract to $t_j$
from $t_i$. From this point onward, the strategy for player $\Ad$ is to remain outside of $A_j$. This straightforward safety strategy ensures that any play following the strategy never enters $A_j$, and thus, will never reach $t_j$. As a result, the reachability player $\Ev$ fails to achieve her goal.

We consider the remaining case that the attractors form a total preorder, but that $s$ is not contained in the minimal attractor set, or that we have $t_{i+1}\notin \AtE(A_{i} \cap F_0)$ for all $i$.
In the former case, player $\Ev$ cannot attract from $s$ to $t_k$. Thus player $\Ad$ wins from $s$ by simply avoiding $t_k$ forever.
In the latter case, player $\Ad$ wins by staying within the attractors $A_i$ but avoiding the sets $A_{i} \cap F_0$ for all $i$. While this strategy may visit all singleton targets $t_i$,
it avoids the set $F_0$ forever so that again $\Ad$ wins.

\end{proof}

\subsection{Generalised Reachability with Mostly Singleton Targets}

Next we consider the case of generalised reachabiliy games with a total of $n+k$ target sets: $n$ singleton targets $t_1, \dots, t_n$ and $k$ target sets $F_1,\dots,F_k$ with $|F_i| > 1$. Let $T = \{t_1,\dots,t_n\}$ and $\F = \{F_1,\dots,F_k\}$ and consider generalised reachability games $\mathcal{G} = (\mathcal{A}, s, \F, T)$.

\begin{thm}Let $\game=(\mathcal{A},s,\F,T)$ be a game with $m$ edges in $\mathcal{A}$, $|T|=n$ and $|\F|=k$.  
Then, $\genreach$ can be solved in time $\mathcal{O}(mn2^k)$.
\end{thm}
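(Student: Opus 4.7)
The plan is to perform a product construction and reduce the problem to an attractor-chain test generalising the singleton-case algorithm of Fijalkow and Horn. I would first build $\mathcal{A}' = \mathcal{A} \times 2^{[k]}$ whose vertices are pairs $(v, S)$ with $S \subseteq [k]$ tracking which large target sets $F_i$ have already been visited. Transitions are $(v, S) \to (v', S \cup \{i : v' \in F_i\})$ for each edge $v \to v'$ in $\mathcal{A}$, and player ownership is inherited from $\mathcal{A}$. The product has $|V| \cdot 2^k$ vertices and $m \cdot 2^k$ edges; the initial state is $(s, S_s)$ with $S_s = \{i : s \in F_i\}$. Under this reduction, the original objective becomes: in $\mathcal{A}'$, visit each cylinder $T'_j = \{t_j\} \times 2^{[k]}$ for $j \in [n]$ and reach the top layer $R = V \times \{[k]\}$. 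The top-layer target is a reachability objective because the $S$-component monotonically grows along any play, so the $k$ large targets collapse into a single reachability condition in the product.

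Next, I would compute $n + 1$ attractors in the product and apply an attractor-chain test. A direct induction on attractor level shows that the attractor of each cylinder $T'_j$ in $\mathcal{A}'$ is itself a cylinder, equal to $\AtE(t_j) \times 2^{[k]}$, where $\AtE(t_j)$ is the attractor of the singleton $\{t_j\}$ computed in the original arena $\mathcal{A}$. Hence the $n$ cylinder attractors follow from $n$ attractor computations in $\mathcal{A}$, taking $\mathcal{O}(nm)$ time in total. The remaining attractor $A'_R$ to the top layer $R$ is computed directly in $\mathcal{A}'$ in $\mathcal{O}(m \cdot 2^k)$ time. The algorithm accepts iff these $n + 1$ attractors form a total preorder under set inclusion and $(s, S_s)$ lies in the smallest attractor; this test is dominated by the attractor computations, giving an overall running time of $\mathcal{O}(nm \cdot 2^k)$.

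Correctness generalises Theorem~\ref{thm:genReachSingletons}. In the sufficient direction, the chain condition lets Eve attract to targets in order of increasing attractor size: the cylinder structure guarantees that after visiting a cylinder target the state remains in every larger cylinder attractor, and monotonicity of the $S$-component guarantees that reaching the top layer is permanent once achieved. In the necessary direction, if two attractors fail to chain, Adam employs a safety strategy to block one of them; when both are cylinders, this reduces by projection to the singleton argument of Fijalkow and Horn. The main technical obstacle will be the case where the top-layer attractor $A'_R$ is incomparable with some cylinder attractor $A'_j$, since $A'_R$ is not itself a cylinder and the direct projection argument no longer applies; I would handle this by combining the monotonicity of $S$ with Adam's safety-against-$R$ strategy to exhibit a product state (reached after Eve's first target visit) from which the remaining target is unattainable.
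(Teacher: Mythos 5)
Your product construction and the runtime accounting match the paper, but the acceptance criterion you propose is wrong: requiring the $n$ cylinder attractors \emph{and} the top-layer attractor $A'_R$ to form a total preorder, with the initial state in the smallest set, is sufficient but not necessary for $\Ev$ to win. The paper's own characterisation (already visible in the all-but-one-singleton case, condition (3) of \cref{thm:genReachSingletons}) is not a comparability condition on the attractor of the large target: it asks that the large target can be picked up \emph{at some specific point along the chain}, i.e.\ $t_{i+1}\in\AtE(A_i\cap F_0)$ for some $i$. Accordingly, the paper's general algorithm does a \emph{nested} backward attractor computation in the product: $D_{n+1}=V\times\{[k]\}$, $A_{i+1}=\AtE(D_{i+1})$, $D_i=\{(u,S)\in A_{i+1}\mid u=t_i\}$, accepting iff $(s,\emptyset)\in D_0$, so that the admissible memory values at each $t_i$ are threaded through the chain rather than summarised by one global attractor. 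A concrete counterexample to your test (single player, all vertices $\Ev$'s, every vertex reachable): vertices $s,t_2,f,t_1,d$ with edges $s\to t_2$, $s\to t_1$, $t_2\to f$, $f\to t_1$, $f\to d$, $t_1\to t_1$, $d\to d$; singleton targets $T=\{t_1,t_2\}$ and one large target $F_1=\{f,d\}$. $\Ev$ wins via $s\,t_2\,f\,t_1^\omega$. But $(t_1,\emptyset)$ lies in the cylinder attractor of $t_1$ and not in $A'_R$ (from $t_1$ the set $F_1$ is unreachable), while $(d,\{1\})$ lies in $A'_R$ and not in the cylinder attractor of $t_1$; both states are reachable in the product, so the two attractors are incomparable and your algorithm rejects a game that $\Ev$ wins.

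This also means the ``main technical obstacle'' you flagged cannot be resolved the way you intend: when $A'_R$ is incomparable with some cylinder attractor you plan to exhibit an $\Ad$ strategy blocking the remaining target, but the example above shows $\Ev$ can win in exactly that situation, so no such strategy exists in general. What is salvageable from your write-up is the sufficiency direction (if your chain test passes, the attractor strategies can indeed be concatenated, using that the cylinder attractors are cylinders and that the memory component is monotone) and the necessity of the chain condition \emph{restricted to the cylinders} (it is necessary that $\Ev$ wins the singleton-only game $(\Aa,s,\emptyset,T)$, which is the paper's first step). To complete the proof you need to replace the comparability test on $A'_R$ by the interleaved computation of the sets $D_i$ through the ordered singleton targets, and prove both directions of the equivalence ``$\Ev$ wins iff $(s,\emptyset)\in D_0$'' as the paper does; the stated bound $\mathcal{O}(mn2^k)$ then comes from $n+1$ attractor computations on the product graph with $m\cdot 2^k$ edges.
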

\begin{proof}
Let $\mathcal{G} = (\mathcal{A}, s, \F, T)$ be a generalised reachability game with parameters as stated in the claim.
First, we observe that a necessary condition for player $\Ev$ to win $G$ is that 
$\Ev$ wins $(\mathcal{A}, s, \emptyset, T)$ as well~\cite{fij-horn2010}.
If player $\Ev$ wins $(\mathcal{A}, s, \F, T)$,
we hence can follow the proof of Theorem~\ref{thm:genReachSingletons} and
assume without loss of generality a total order on $T$ as $t_1,\dots,t_n$ where $s = t_0 \in \AtE{(t_1)}$ and $t_i \in \AtE{(t_{i+1})}$. Then let $T'$ denote $\{t_0\}\cup T$.

Next we transform the game arena of $\mathcal{G}$ in order to treat to non-singleton target sets $\mathcal{F}$. To this end, 
let $V$ denote the set of game nodes of $\mathcal{A}$ and 
consider the game arena $\hat{\mathcal{A}}$ over $\hat{V} = V \times 2^{[k]}$, that is, $V$ is augmented with the memory structure $2^{[k]}$, storing the set of target sets from
$\mathcal{F}$ that have been satisfied so far. States $(u,S)\in\hat{V}$ are owned by the player owning $u$ in $\mathcal{A}$ and we have an edge $(u,S) \rightarrow (v,S')$ in $\hat{\mathcal{A}}$ iff (i) $u \rightarrow v$ is an edge in $\mathcal{A}$ and (ii) $S' \subseteq S \cup \{j\in\{1,\ldots,k\} \mid v \in F_j\}$.

Next, we inductively define a sequence of sets $D_i$ by putting $D_{n+1} = V \times \{1,\ldots,k\}$ and, for $0\leq i\leq n$,
$D_i = \{(u,S) \in A_{i+1} \mid u = t_{i}\}$, where $A_{i+1} = \AtE(D_{i+1})$.

We claim that $\Ev$ wins ${\mathcal{G}}$ if and only if $(s,\emptyset)\in D_0$.

For one direction, let $(s,\emptyset)\in D_0$. Then $\Ev$ can just follow her individual attractor strategies one after another, which results in visits
to all target states $t_i\in T$, as well to all target sets $F_i\in \F$; the latter is the case since the values of the auxiliary memory indicate the 
target sets that have been visited so far, and since the constructed strategy ensures that a game node from
$D_{n+1}$, that is, with auxiliary memory value $\{1,\ldots,k\}$ is eventually reached.

For the converse direction, let $(s,\emptyset)\notin D_0$.
We show that $\Ad$ wins $\mathcal{G}$.
We first assume that the attractors of all target sets are different.
Then we construct a winning strategy for $\Ad$ in $\mathcal{G}$ as follows.
We note that $\Ad$ can follow the main objective to force the order in which the individual $t_i$ are reached.
For this, he first and foremost wins if the first component of $\hat{\mathcal{A}}$ (the original game node in $\mathcal{A}$) falls outside of the attractor for the next target
state.

With this in mind, $\Ad$ can just follow a strategy in $\mathcal{G}$ that attempts to prevent that the induced play on $\hat{\mathcal{A}}$ ever reaches $D_1$ from states of the form $(t_0,S) \notin D_0$; if this fails, $\Ad$ uses his strategy to prevent states of the form $(t_1,S) \notin D_1$ to reach $D_2$, and so on.
Since $(s,\emptyset)\notin D_0$, $\Ad$ can use this strategy to ensure
that whenever a game node $u$ is reached by a play in $\mathcal{G}$ that visits all target sets $F_i\in \mathcal{F}$ (which corresponds to reaching the node $(u,\{1,\ldots,k\})\in D_{n+1}$ in
the induced play on $\mathcal{A}'$), then at least one of the target states $t_i$ has not been visited by the play so far. Thus the described strategy indeed is winning for $\Ad$, as required.

We note that if two or more target states have the same attractor, then they are neighbours $t_i,\ldots,t_j$, and $D_i,\ldots,D_j$ only differ by their first component.
For deciding reachability, we can then just keep one such state, keeping in mind that we can reach all other such states and return to any of these target states afterwards.

Regarding runtime complexity, the proposed solution algorithm computes $n+2$ sets $D_i$ and $A_i$; both are subsets of $V\times 2^{[k]}$. Given $A_{i+1}$, the computation of a single set $D_i$ can be done in time linear in $m \cdot 2^k$.
The computation of a single attractor set $A_i$ over a graph with $m \cdot 2^k$ edges takes time $\mathcal{O}(m \cdot 2^k)$. Hence the overall algorithm can be implemented to run in time
$\mathcal{O}(nm2^k)$.
\end{proof}

This in particular provides fixed parameter tractability.

\begin{col}
$\genreach$ is in \P~when the number of target sets that are not singleton is logarithmic in the size of the game.
\end{col}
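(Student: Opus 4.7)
The plan is to derive this corollary directly from the preceding theorem, which bounds the runtime of solving $\genreach$ by $\mathcal{O}(mn2^k)$, where $m$ is the number of edges in $\mathcal{A}$, $n$ is the number of singleton target sets, and $k$ is the number of non-singleton target sets. The key observation is that $m$, $n$, and $k$ are all bounded by the size $N$ of the input game, so it suffices to show that the $2^k$ factor becomes polynomial in $N$ under the hypothesis on $k$.

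First I would fix a measure $N$ for the size of the game (for instance $N = |V| + |E| + |\F| + |T|$), and note that trivially $m \leq N$ and $n \leq N$. Then I would assume the hypothesis of the corollary, namely that $k \leq c \log N$ for some constant $c$. Under this assumption, $2^k \leq 2^{c \log N} = N^c$, which is polynomial in $N$.

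Substituting into the runtime bound from the previous theorem yields $\mathcal{O}(mn2^k) = \mathcal{O}(N \cdot N \cdot N^c) = \mathcal{O}(N^{c+2})$, which is polynomial in the size of the game. Hence $\genreach$ is decidable in polynomial time under the stated restriction, and the corollary follows.

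There is no real obstacle here: the proof is a one-line instantiation of the parametrised complexity bound from the preceding theorem. The only care needed is in clearly fixing the notion of ``size of the game'' and pointing out that $m, n \leq N$, so that the $mn$ factor contributes only a polynomial blow-up on top of the $2^k = N^{O(1)}$ factor.
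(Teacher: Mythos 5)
Your proposal is correct and is exactly the argument the paper intends: the corollary is an immediate instantiation of the $\mathcal{O}(mn2^k)$ bound from the preceding theorem, with $k = O(\log N)$ making $2^k$ polynomial in the game size. The paper gives no further proof beyond this observation, so your derivation matches its approach.
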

\subsection{One Player Case with $\Ad$}
Here we consider the special case of generalised reachability games with just one player, $\Ad$.
For one-player games with $\Ev$, $\genreach$ is known to be $\NP$-complete in general, and in $\P$ when $|F_i| \leq 2$ for all $F_i \in \F$~\cite{fij-horn2010}.
If $|F_i|=1$ for all $F_i\in \F$, \cite[Theorem 5]{austin2025} can be modified to obtain $\NL$-completeness. 
On the other hand, for the one-player case with $\Ad$, the general case was shown to be in $\P$~\cite{fij-horn2010}. We provide an improved complexity bound by showing that the single-player case with $\Ad$ is in fact $\NL$-complete. 
\begin{thm}
$\genreach$ is $\NL$-complete when all vertices in $V$ belong to $\Ad$. The $\NL$-hardness holds even when $|\F|=1$. %
\end{thm}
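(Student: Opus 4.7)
The plan is to establish membership and hardness separately.

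For $\NL$-membership, I would work with the complement. Since $\Ev$ has no choices, $\Ev$ wins iff every play from $s$ visits every $F_i\in\F$; equivalently, $\Ad$ wins iff there exist an index $i\in[n]$ and a maximal walk (finite ending at a sink of the arena, or infinite) from $s$ that stays inside $V\setminus F_i$. In a finite arena, such a walk exists iff from $s$ one can reach, within $V\setminus F_i$, some vertex $v\in V\setminus F_i$ that is either a sink of the underlying arena or lies on a cycle of the subgraph induced by $V\setminus F_i$. An $\NL$ procedure for $\Ad$-winning then guesses $i$ and $v$, verifies the reachability $s\to^* v$ inside $V\setminus F_i$ step by step, and either confirms $v$ has no outgoing edges or guesses a walk from $v$ back to $v$ inside $V\setminus F_i$; at any time only the current vertex together with $i$ and $v$ needs to be stored. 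By Immerman--Szelepcs\'enyi ($\NL=\mathsf{coNL}$), this also places $\Ev$-winning in $\NL$.

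For $\NL$-hardness even when $|\F|=1$, I would reduce from $s$-$t$ reachability on directed acyclic graphs, which is $\NL$-complete. Given a DAG $G=(V,E)$ with source $s$ and target $t$, I build the arena $G'$ on vertex set $V\cup\{d\}$ (with $d$ fresh; all vertices owned by $\Ad$) and edges
\[
E' = E \cup \{(t,t),(d,d)\} \cup \{(v,d)\mid v\in V\setminus\{t\} \text{ is a sink of } G\},
\]
start vertex $s$, and $\F=\{\{d\}\}$. Every vertex now has a successor, and the claim is that $\Ev$ wins iff $t$ is \emph{not} reachable from $s$ in $G$: if $t$ is reachable, $\Ad$ follows a path to $t$ and loops there forever, never touching $d$; if $t$ is unreachable, any walk from $s$ that avoids $d$ must stay in $V$, but acyclicity of $G$ then forces the walk to terminate at a non-$t$ sink, whose only outgoing edge is to $d$. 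Combined with $\NL=\mathsf{coNL}$ this yields $\NL$-hardness of $\genreach$ in the stated setting.

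The only subtle points are allowing finite plays ending at dead-ends in the complement analysis and preserving ``every vertex has a successor'' in the reduction; both are routinely handled by the added self-loops at $t$ and $d$ and the redirection of original sinks to $d$. No substantive technical obstacle is anticipated.
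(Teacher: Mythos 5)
Your proposal is correct and takes essentially the same approach as the paper: for membership you characterise $\Ad$-winning by an index $i$ plus a reachable cycle avoiding $F_i$, guess this witness in logspace and invoke $\NL=\coNL$, which is exactly the paper's lasso argument (just add a step counter so the nondeterministic walk is guaranteed to halt); for hardness both proofs reduce reachability to the complement via a ``collector'' target vertex ($d$ versus $\bot$), the only difference being that you start from DAG $s$-$t$ reachability while the paper layers an arbitrary graph into $V\times[n+1]$ to enforce the same acyclicity itself. Both variants are sound.
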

\begin{proof}
First, we show containment in $\NL$. $\Ad$ wins if he has a strategy to ensure that some $F_i \in \F$ is never visited. In other words, $\Ad$ wins from vertex $v$ iff $\exists F_i \in \F$ such that $v \not \in \AtE(F_i)$. %
Observe that, if $\Ad$ wins, there is always a winning strategy of $\Ad$ that produces a \emph{lasso} play, i.e. a play of the form $v_0 \dots v_i \dots v_m$ where $m \leq |V|$, $v_i = v_m$ and no vertex is repeated except $v_i = v_m$. This is because, if a winning strategy produces a play with more than one loop, then $\Ad$ has another winning strategy where at least one of those loops is not taken. We call such a play an $(v_0,v_i,v_i)$ lasso.  
Based on this, we will provide an $\NL$ algorithm for the complement decision problem i.e. for checking if $\Ad$ has a winning strategy.
Since $\NL = \coNL$, this gives an $\NL$ upper bound. 

We non-deterministically guess a $(s,t,t)$ lasso and a set $F_i$ such that no vertex from the lasso is in $F_i$.
We guess the lasso by guessing successors step by step starting from $s$. At each step we check that the vertex is not in $F_i$. We also non-deterministically guess the $t$ at some step and store it. We also maintain a counter to store the length of the play so far. The algorithm terminates when $t$ is repeated and the length is no more than $n$. This algorithm uses logarithmic space.

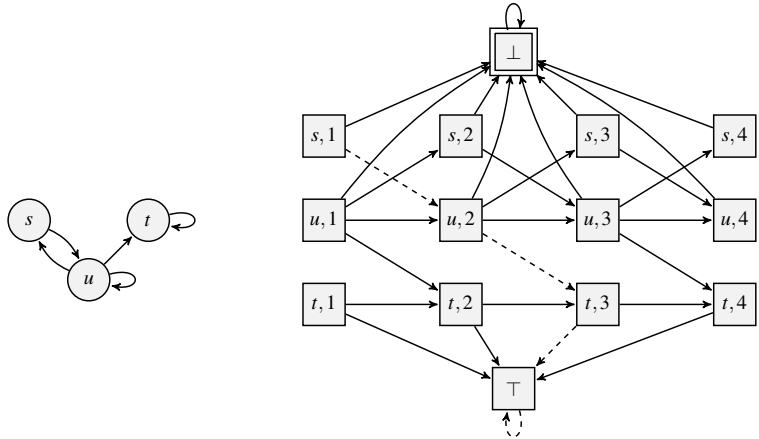
\begin{figure}[h!]
 \centering
\scalebox{0.7}{
\begin{tikzpicture}[shorten >=0.7pt, node distance=1.6cm, on grid, auto, thick, >=stealth']
   \tikzstyle{state}=[circle, draw, fill=gray!10, minimum size=8mm]
   \tikzstyle{box}=[rectangle, draw, fill=gray!10, minimum size=8mm]

   \node[state] (s) [yshift=-0.8cm] {$s$};
   \node[state] (u) [below right=of s] {$u$};
   \node[state] (t) [above right=of u] {$t$};

   \node[box] (s1) [right=of start, xshift=4cm,yshift=0.8cm] {$s,1$};
   \node[box] (u1) [below=of s1] {$u,1$};
   \node[box] (t1) [below=of u1] {$t,1$};

   \node[box] (s2) [right=of s1, xshift=1cm] {$s,2$};
   \node[box] (u2) [below=of s2] {$u,2$};
   \node[box] (t2) [below=of u2] {$t,2$};

   \node[box] (s3) [right=of s2, xshift=1cm] {$s,3$};
   \node[box] (u3) [below=of s3] {$u,3$};
   \node[box] (t3) [below=of u3] {$t,3$};

   \node[box] (s4) [right=of s3, xshift=1cm] {$s,4$};
   \node[box] (u4) [below=of s4] {$u,4$};
   \node[box] (t4) [below=of u4] {$t,4$};

   \node[box,double, double distance=0.7mm] (bot) [above=of s2, xshift=1cm] {$\bot$};
   \node[box] (top) [below=of t2, xshift=1cm] {$\top$};

   \path[->]

   (s) edge [bend left=20] (u)
   (u) edge [bend left=20] (s)
   (u) edge (t)
   (u) edge [loop right] (u)
   (t) edge [loop right] (t)

   (s1) edge [dashed] (u2)
   (u1) edge (u2)
   (u1) edge (s2)
   (u1) edge (t2)
   (t1) edge (t2)
   
   (s2) edge (u3)
   (u2) edge  (u3)
   (u2) edge (s3)
   (u2) edge [dashed] (t3)
   (t2) edge (t3)

   (s3) edge (u4)
   (u3) edge (u4)
   (u3) edge (s4)
   (u3) edge (t4)
   (t3) edge (t4)

   (t1) edge (top)
   (t2) edge (top)
   (t3) edge [dashed] (top)
   (t4) edge (top)

   (s1) edge (bot)
   (s2) edge (bot)
   (s3) edge (bot)
   (s4) edge (bot)
   (u1) edge [bend left=10] (bot)
   (u2) edge [bend right=10] (bot)
   (u3) edge [bend left=10] (bot)
   (u4) edge [bend right=10] (bot)

   (top) edge [dashed, loop below] (top)
   (bot) edge [loop above] (bot);
\end{tikzpicture}
}
 \caption{Example of reduction from $s-t$ reachability to one-player generalised reachability.}
    \label{fig:nl}
\end{figure}

For the lower bound, we provide a reduction from the $\NL$-complete $s-t$ reachability problem. The $s-t$ reachability problem asks whether a vertex $t$ is reachable from a vertex $s$ in a given graph $H = (V_H,E_H)$ with $s,t \in V_H$. We construct a game $\Gg$ such that $t$ is reachable from $s$ in $H$ if and only if $\Ad$ has a winning strategy.

Let $|V_H| = n$. The game $\Gg$ has an arena $\Aa$ with underlying graph $G = (V,E)$.  We have $V = |V_H| \times [n+1] \cup \{ \top, \bot\}$, all of which are owned by $\Ad$. If $(u,v) \in E_H$, then for every $1\leq i \leq n$, in $E$ we have $(u,i) \rightarrow (v,i+1)$. For every $i$ we have $(t,i) \rightarrow \top$. For every $v \in V\setminus \{t\}$ and for each $i$, we have $(v,i) \rightarrow \bot$. The vertices $\top$ and $\bot$ have self loops. $(s,1)$ is the start vertex and $\Gg$ has only one singleton target set $T = \{\bot\}$. %
We note that $\Gg$ can be constructed in logspace from $H$.
Figure~\ref{fig:nl} shows an example of this reduction for a graph with three vertices; the dashed arrows indicate a strategy for $\Ad$ to spoil reachability of $\{\bot\}$ in the reduced game,
corresponding to the fact that $t$ is reachable from $s$ in the graph.

We claim that in general, $t$ is reachable from $s$ in $H$ iff $\Ad$ wins the game $\Gg = (\Aa,(s,1),\{\bot\})$. %
To prove the claim, we observe that all infinite paths in $G$ end up in exactly one of the self loops at $\top$ or $\bot$. If $t$ is reachable from $s$, then $\Ad$ has a path from $(s,1)$ to $(t,i)$ for some $i$ and then can move to $\top$ and hence wins the game, i.e. $\Ad$ has a winning $((s,1),\top,\top)$ lasso. If $t$ is not reachable from $s$, then all paths starting from $(s,1)$ end up at $\bot$, in which case $\Ad$ loses. 
\end{proof}

\begin{remark}
For one-player games with $\Ad$, even for the optimisation variants of the problem, i.e. the $\maxgenreach$ and $\maxgenreachpromise$ problems, $\Ad$ always has an optimal strategy that produces a lasso. This follows from the fact that it is always better for $\Ad$ to encounter fewer distinct vertices, and consequently fewer target sets. 
\end{remark}

\section{Maximum Generalised Reachability}

In this section, we address the complexity of the $\maxgenreach$ and the $\maxgenreachpromise$ problems. %
\subsection{One Player Case with $\Ev$}
We first consider the case where all vertices of the game are controlled by $\Ev$. In this case, a strategy of $\Ev$ produces a unique play $\rho$. Therefore, the $\maxgenreach$ and $\maxgenreachpromise$ problems are equivalent if $\Ev$ pledges the set of targets seen in the play $\rho$. Hence we only state our results for the $\maxgenreach$ variant.

\begin{thm}
\label{thm:maxgrev}
Let $\game$ be a game where $\Ev$ controls all the vertices, i.e, $V=V_{\Ev}$.
\begin{enumerate}
    \item $\maxgenreach$ is in \P~when each target set $F_i \in \F$ is of size $1$.
    \item $\maxgenreach$ is \NP-complete in general. It is \NP-hard even when $|F_i| = 2$ for each target set $F_i \in \F$. 
\end{enumerate}
\end{thm}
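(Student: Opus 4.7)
The plan is to handle the three parts of the statement separately, exploiting the fact that in the one-player setting with $\Ev$ each strategy produces a single infinite play, so there is no quantifier alternation to worry about.

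\textbf{Part~1 ($\P$ algorithm for singletons).} Let $T \subseteq V$ collect the singleton target states. The key observation is that any infinite play from $s$ traces out a sequence $C_0, C_1, \ldots, C_\ell$ of SCCs of $\Aa$ that forms a directed path in the SCC-DAG, with $C_0$ the SCC of $s$. Conversely, within each SCC $C_j$ on such a path, $\Ev$ can insert a sub-walk visiting every vertex of $C_j$ by strong connectedness, so all singleton targets in $C_j$ are collected. Hence the maximum number of singleton targets $\Ev$ can visit equals
\[
\max_{\pi} \; \sum_{C \in \pi} |T \cap C|,
\]
where $\pi$ ranges over directed paths in the SCC-DAG starting at $C_0$. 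This is a maximum-weight path in a DAG with node weights $|T \cap C|$, computable in polynomial time by topological-order dynamic programming; accept iff the optimum is at least $k$.

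\textbf{Part~2, $\NP$ upper bound.} I use a shortening argument: if any walk from $s$ visits target sets $F_{i_1}, \ldots, F_{i_k}$ at first-visit times $t_1 < \ldots < t_k$, then replacing each inter-target segment by a shortest path between its endpoints yields a walk of length at most $k\cdot|V|$ that still visits these $k$ target sets. The $\NP$ algorithm thus guesses such a polynomial-length walk from $s$ and verifies that it visits at least $k$ distinct target sets; the walk extends to an infinite play since every vertex has a successor.

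\textbf{Part~2, $\NP$-hardness with $|F_i|=2$.} I reduce from MAX-$2$-SAT. Given a 2-CNF formula with variables $x_1,\ldots,x_n$ and clauses $c_j=\ell_{j,1}\vee\ell_{j,2}$ for $j\in[m]$, construct a DAG entirely owned by $\Ev$: decision vertices $d_0,\ldots,d_n$ and literal vertices $v_{x_i}, v_{\neg x_i}$ for each $i$, with edges $d_{i-1}\to v_{x_i}$, $d_{i-1}\to v_{\neg x_i}$, $v_{x_i}\to d_i$, $v_{\neg x_i}\to d_i$, and a self-loop at $d_n$. Let $s=d_0$ and, for each clause $c_j$, set $F_j = \{v_{\ell_{j,1}}, v_{\ell_{j,2}}\}$, giving $|F_j|=2$. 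Each play corresponds to a truth assignment $\alpha$ (with $\alpha(x_i)=\text{true}$ iff the play uses $v_{x_i}$), and the play visits $F_j$ iff $c_j$ is satisfied under $\alpha$. Hence $\Ev$ can visit at least $k$ target sets iff some $\alpha$ satisfies at least $k$ clauses, establishing $\NP$-hardness from MAX-$2$-SAT. The main obstacle is the shortening argument for the $\NP$ upper bound: one must argue that compressing the walk between consecutive first-visits of target sets never discards a committed visit, which is immediate since the preserved endpoints lie in the corresponding target sets, but it is precisely this step that converts the unbounded-length search space of plays into a polynomial-size certificate.
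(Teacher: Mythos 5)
Your proposal is correct and follows essentially the same route as the paper: the SCC-decomposition with maximum-weight path dynamic programming for singleton targets, the guess-and-shorten argument (a walk of length at most $k\cdot|V|$) for the \NP{} upper bound, and the MAX-2-SAT reduction via the QBF-style gadget with all vertices given to $\Ev$ for the hardness with $|F_i|=2$. No gaps; your write-up is if anything slightly more explicit than the paper about why the shortened walk still certifies the visited target sets.
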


\begin{proof}
For the case with all target sets of size $1$, we present an algorithm that runs in polynomial time. The algorithm first computes the strongly connected component (SCC) decomposition of the arena. For each SCC, we assign a value equal to the number of target vertices contained in it. Within each SCC, $\Ev$ can visit all the target vertices. Computing the maximum number of target vertices that $\Ev$ can visit corresponds finding a path in the SCC decomposition that maximises the sum of the value of the SCCs contained in the path.
This can be computed bottom-up using dynamic programming starting from the bottom SCCs. 
This is explained in detail in the proof of Theorem \ref{thm:maxgrp2p}, as the algorithm also works for the two-player case.
When target sets are of size $2$, NP-hardness follows from MAX-2-SAT problem, which is known to be NP-hard~\cite{Halaby2016}. This follows the same reduction as the one used to show PSPACE-hardness in the general case with 2 players and arbitrary target sets~\cite{fij-horn2010}, presented in Figure~\ref{fig:qbf}.

To see membership in NP, $\Ev$ can guess a path of size at most $nk$ in length that visits $k$ target sets, where $n$ is the number of vertices in the arena. This is possible as the length of the path between any two consecutive target vertices not seen before is at most $n$.
\end{proof}

\subsection{One Player Case with $\Ad$}

Recall that the $\maxgenreach$ and $\maxgenreachpromise$ problems can have different solutions even when $\Ad$ is the sole player, as demonstrated by a game obtained by modifying Figure~\ref{fig:maxproblem}, where all vertices belong to Adam : in this case, Eve cannot promise to visit any \emph{particular} target vertex, but she can ensure that at least one target vertex is visited.
In this section we deal with the one-player variant of these problems with $\Ad$ as the only player. First we provide the full complexity picture for the $\maxgenreach$ problem. 
\begin{thm}
\label{thm:maxgrad}
Let $\game$ be a game where $\Ad$ controls all the vertices, i.e, $V=V_{\Ad}$. 
\begin{enumerate}
    \item  $\maxgenreach$ is in $\P$ when all target sets are singleton.
    \item $\maxgenreach$ is $\coNP$-complete in general. The $\coNP$-hardness holds even when $|F_i| = 2$ for each $F_i \in \F$. 
\end{enumerate}
\end{thm}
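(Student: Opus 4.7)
The plan for Part~(1) is to recast the problem as a minimum-cost lasso computation. Since $\Ad$ is the sole player and, by the preceding remark, has an optimal strategy producing a lasso, deciding $\maxgenreach$ reduces to computing $m^*:=\min\{|T\cap V(\pi)|\mid \pi \text{ an infinite play from }s\}$ and checking whether $m^*\geq k$. Any lasso can be decomposed as a simple prefix path $P\colon s\to u$ followed by a simple cycle $\gamma$ through $u$; by iteratively rotating the merge vertex to the \emph{first} target that $P$ and $\gamma$ share and truncating $P$ accordingly, one can moreover assume that the only target in $V(P)\cap V(\gamma)$ is $u$ itself (if at all). For such a canonical lasso, $|T\cap V(P\cup\gamma)|=|T\cap V(P)|+|T\cap V(\gamma)|-[u\in T]$, yielding
\[
m^* \;=\; \min_{u\text{ on a cycle reachable from }s}\bigl(d(s,u)+\mu(u)-[u\in T]\bigr),
\]
where $d(s,u)$ is the minimum number of targets on any path from $s$ to $u$ (a $0/1$ vertex-weighted shortest-path problem, polynomial via a modified Dijkstra or 0-1 BFS) and $\mu(u)$ is the minimum number of targets on any simple cycle through $u$ (a $0/1$ vertex-weighted minimum-cycle problem, polynomial via vertex splitting and all-pairs shortest paths). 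Computing both values for every vertex and minimising yields a polynomial-time algorithm.

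For Part~(2), the upper bound follows from the remark via a standard certificate argument: $\maxgenreach(\game,k)$ is a no-instance iff $\Ad$ has a strategy visiting at most $k-1$ target sets, and the remark supplies a lasso witness of length at most $|V|$ whose number of visited $F_i$ can be tallied in polynomial time; this places the complement in $\NP$ and $\maxgenreach$ in $\coNP$.

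For $\coNP$-hardness at $|F_i|=2$, I would reduce from the $\NP$-complete Min-$2$-SAT problem~\cite{Kohli-et-al-94}: given a $2$-CNF formula $\phi=c_1\wedge\cdots\wedge c_m$ and an integer $k'$, is there a truth assignment satisfying at most $k'$ clauses? Adapting the QBF-style gadget of Figure~\ref{fig:qbf}, every variable-selection vertex is owned by $\Ad$ (the only player): for each variable $x_i$, $\Ad$ chooses between literal vertices $x_i$ and $\neg x_i$; for each clause $c_j$, the target set $F_j$ consists of the two literal vertices that satisfy $c_j$, ensuring $|F_j|=2$. A play of $\Ad$ then encodes a truth assignment of $\phi$, and $F_j$ is visited iff the assignment satisfies $c_j$. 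Hence $\Ad$ has a strategy visiting at most $k'$ target sets iff $\phi$ admits an assignment satisfying at most $k'$ clauses, so $\maxgenreach(\game,k'+1)$ is a no-instance iff $(\phi,k')$ is a yes-instance of Min-$2$-SAT, which establishes the required $\coNP$-hardness already at $|F_i|=2$. The main obstacle will be the rotation-based correctness argument underlying the equality for $m^*$ in Part~(1); the remaining ingredients (vertex-weighted shortest paths, minimum-weight cycles, lasso certificates, and the Min-$2$-SAT reduction) are standard once that equality is in place.
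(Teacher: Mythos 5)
Your proposal is correct and follows essentially the same route as the paper: part (1) reduces to a lightest $0/1$-weighted lasso computation (the paper iterates over the merge vertex $t$ and computes the lightest $s\to t\to t$ path, which is exactly your $d(s,u)+\mu(u)-[u\in T]$ decomposition), the coNP upper bound uses the same short-lasso certificate, and the hardness is the same Min-2-SAT reduction through the QBF-style gadget with all vertices handed to $\Ad$. The rotation step you flag as the main obstacle is in fact unnecessary: the optimal lasso guaranteed by the remark repeats no vertex except the merge vertex, so the prefix and the cycle already intersect only there.
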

\begin{proof}
We begin with a polynomial time algorithm for the case where target sets have size $1$. Note that a strategy of $\Ad$ is just a path $\rho$. We can assume that $\rho$ is a path of the form $s\to t\to t$. Otherwise, one can find a minimal loop in $\rho$ and simply repeat it to obtain a path $\rho'$ of the correct form. Since $\rho'$ visits a subset of vertices visited by $\rho$, $\rho'$ should be at least as good as $\rho$ for $\Ad$. Therefore, we search for paths of the form $s\to t\to t$ which visit the least number of target vertices.

We assign weights to edges of the graph $G$ using the function $w: E\to \{0,1\}$ as follows: 
$w(u,v)=1$ if, and only if, $\{v\}\in \F$.
The problem reduces to computing the lightest weighted $s\to t\to t$ path in the weighted graph. 
This can be done by iterating over all choices of $t$ and computing the lightest such path.

For co-NP hardness, we look at the complement problem, i.e. given a game $\game$ with $V=V_\Ad$ and $k$, is there a strategy of $\Ad$ such that on playing this strategy at most $k$ targets are visited. We consider the problem MIN-2-SAT, which is known to be NP-hard \cite{Kohli-et-al-94}. 
Following the same reduction as in  NP-hardness in proof of Theorem \ref{thm:maxgrev}, but giving control of the vertices to $\Ad$, we get a game in which $\Ad$ can visit at most $k$ target vertices if and only if there is an assignment which satisfies at most $k$ clauses.  

For the upper bound, we observe that $\Ad$'s strategy can be simplified to a path of length at most $n+1$, because once he sees a vertex twice, he can simply repeat the loop without visiting more target sets. We can thus guess a path of length $n+1$ and check if this hits at most $k$ targets. This puts the complement problem in NP and thus shows the co-NP membership. 
\end{proof}

Note that the algorithm to solve $\genreach$ games where all vertices are controlled by $\Ad$ computes $\AtE$ for each target set, and checks if the initial vertex is in the $\AtE$ for all target sets or not. If not, $\Ad$ has a choice to avoid a target set and win. This runs in polynomial time as computing the 
$\AtE$ can be done in polynomial time. However, this does not work for the $\maxgenreach$ problem as the initial state $s$ might not be in the $\AtE(F_i)$ and $\AtE(F_j)$, but in the $\AtE(F_i\cup F_j)$. Thus, the play will visit at least $1$ of the target sets.
For example, in the game from Figure~\ref{fig:maxproblem}, $s$ and $u$ are neither in $\AtE(\{u_1\})$, nor in $\AtE(\{u_3\})$, but they are in $\AtE(\{u_1,u_3\})$.

However, the polynomial time algorithm can be adapted to work for the $\maxgenreachpromise$ problem.

\begin{thm}
\label{thm:maxgrprad}
Let $\game$ be a game where $\Ad$ controls all the vertices, i.e, $V=V_{\Ad}$. Then,~\\   
  $\maxgenreachpromise$  is in \P.
\end{thm}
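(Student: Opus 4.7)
My plan is to exploit the fact that when $V = V_\Ad$, Eve has no moves at her disposal, so her ``strategy'' reduces to the initial pledge of a subset $\F' \subseteq \F$. A pledge $\F'$ is therefore valid iff every infinite play from $s$ visits every $F_i \in \F'$. The central observation I would use is that this validity property is separable across $\F'$: since universal quantification over infinite plays distributes over the finite conjunction over $\F'$, the pledge $\F'$ is valid if and only if every $F_i \in \F'$ is individually valid, that is, if and only if every infinite play from $s$ visits $F_i$.

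From this it follows immediately that the optimal pledge is the fixed set
\[
\F^\star = \{F_i \in \F \mid s \in \AtE(F_i)\},
\]
since, in the one-player Adam setting, $\AtE(F_i)$ collects exactly those vertices from which every infinite play enters $F_i$ within finitely many steps (the only way a vertex $u \in V_\Ad$ is added to the attractor is when all of its successors have already been added). The algorithm is then straightforward: for each of the $|\F|$ target sets, compute $\AtE(F_i)$ in $\mathcal{O}(|E|)$ time as recalled in the preliminaries, test membership of $s$, and answer ``yes'' iff the count of successful tests is at least $k$. This runs in $\mathcal{O}(|\F| \cdot |E|)$ time, giving the desired polynomial-time bound.

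I do not foresee any real technical obstacle beyond articulating the separability observation; this is essentially the whole content of the theorem. It is worth contrasting this with $\maxgenreach$ for one-player Adam games in Theorem~\ref{thm:maxgrad}: without the pledge commitment, Adam's optimal behaviour depends on the \emph{joint} family of targets he collectively tries to avoid, since avoiding several $F_i$ simultaneously may require fewer or different moves than avoiding each in isolation. This is precisely why the optimisation problem without pledge fails to decompose and becomes $\coNP$-complete, whereas with the pledge the problem reduces to $|\F|$ independent attractor computations.
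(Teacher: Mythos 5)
Your proposal is correct and follows essentially the same route as the paper: compute $\AtE(F_i)$ for every $F_i \in \F$, observe that with $V = V_\Ad$ a pledge is achievable precisely when $s$ lies in the attractor of each pledged set (so validity decomposes target-by-target), and answer yes iff $s$ belongs to at least $k$ of these attractors. The explicit ``separability'' remark and the contrast with $\maxgenreach$ are nice framing but add nothing beyond the paper's argument.
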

\begin{proof}
    The $\maxgenreachpromise$ problems asks, if there is a $k$ sized subset of target sets such that the play visits a target from each of the $k$ chosen target sets. 
    The algorithm computes the $\AtE(F_i)$ for all target sets $F_i$ and counts how many of them contain the initial state $s_0$. If $s_0$ is contained in at least $k$ of the $\AtE$ sets, then any play will visit the $k$ target sets. This is because being in the attractor set of a target $F_i$ guarantees that Eve has a strategy to force the play to reach $F_i$, even though Adam controls all the vertices. Therefore, $\Ev$ can promise $k$ target sets whose $\AtE$ contains $s_0$. If $s_0$ is not contained in $k$ attractor sets, then no matter which $k$ sets $\Ev$ promises, there will be a promised target set $F'$, such that $s_0\not \in \AtE(F)$. Therefore, $\Ad$ will be able to avoid the promised target sets.
\end{proof}

\subsection{Two Player Case}
Here we discuss the complexity of the two-player case.
We first state the results for the $\maxgenreach$ problem. %

\begin{thm}
Let $\game$ be a game.
\begin{enumerate}
\item $\maxgenreach$ is \PSPACE-complete. The \PSPACE-hardness holds even when $|F_i| = 2$ for each $F_i \in \F$.
\item $\maxgenreach$ is NP-hard when each $F_i \in \F$ is singleton.

\end{enumerate}
\end{thm}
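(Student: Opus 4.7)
My plan has three parts addressing \PSPACE-membership, \PSPACE-hardness already for $|F_i|=2$, and \NP-hardness for singleton targets.

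For the \PSPACE\ upper bound of $\maxgenreach$, I would reduce to $\genreach$ via an additional nondeterministic guess. Nondeterministically select a subset $\F' \subseteq \F$ with $|\F'| \geq k$ (polynomial space), then solve $\genreach$ on $(\Aa, s, \F')$, which is in \PSPACE\ by~\cite{fij-horn2010}. Since NPSPACE coincides with \PSPACE, this yields \PSPACE-membership.

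For \PSPACE-hardness with $|F_i|=2$, I would reduce from TQBF by extending the Fijalkow--Horn variable-phase construction with a clause-phase gadget that replaces each size-three clause target by a controlled collection of size-two targets. Given a 3-CNF QBF $\phi$ with clauses $c_1,\ldots,c_m$, after the variable phase I would append a sequence of clause gadgets. For clause $c_j = \{l_1^j, l_2^j, l_3^j\}$, an $\Ev$-vertex $E_j$ has three successors $E_j^1, E_j^2, E_j^3$, and successor $E_j^i$ leads through two fresh ``skip'' vertices $p_j^{i_1}$ and $p_j^{i_2}$ (with $\{i_1, i_2\} = \{1,2,3\}\setminus\{i\}$) before continuing to $E_{j+1}$. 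The pair targets are $F_j^i = \{p_j^i, l_i^j\}$. The key observation is that picking $i$ at $E_j$ causes $\Ev$ to skip $p_j^i$ but visit the other two vertices, so $F_j^{i_1}$ and $F_j^{i_2}$ are visited automatically while $F_j^i$ is visited iff $l_i^j$ was visited in the variable phase (equivalently, the literal is true in the assignment). Per clause, playing optimally, $\Ev$ attains count $3$ iff the clause is satisfied and count $2$ otherwise. Setting $k = 3m$, $\Ev$'s optimal total count reaches $3m$ iff every clause is satisfied under the chosen $\exists/\forall$ play, which is precisely when $\phi$ is true.

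For \NP-hardness with singleton targets, I would reduce from 3-SAT. Since the one-player-$\Ev$ case with singleton targets is in $\P$ by Theorem~\ref{thm:maxgrev}, $\Ad$'s branching must drive the hardness. The plan is: $\Ev$ commits to a variable assignment through her own choice vertices; then $\Ad$ picks a clause to audit; $\Ev$ must then pick a literal and take a path through specific variable-phase vertices in order to reach a singleton clause-target that witnesses satisfaction under her commitment. The principal obstacle is that singleton targets lack the disjunctive expressiveness of the pair targets used in the \PSPACE-hardness construction: ensuring that the clause target is reachable iff $\Ev$'s commitment satisfies the clause typically requires a polynomial-size careful duplication of the variable-phase subgraph, or a clever reuse of shared vertices between the two phases, set up so that $\Ev$ cannot inflate her count by routing that is inconsistent with her committed assignment.
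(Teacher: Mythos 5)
Your proposal has genuine gaps in all three parts.

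\textbf{Upper bound.} Guessing $\F'\subseteq\F$ with $|\F'|\geq k$ and solving $\genreach$ on $(\Aa,s,\F')$ decides $\maxgenreachpromise$, not $\maxgenreach$. The two are not equivalent: $\Ev$ may be able to guarantee that at least $k$ sets are visited even though for every \emph{fixed} collection of $k$ sets $\Ad$ can avoid one of them (the paper's own example, Figure~\ref{fig:maxproblem} with all vertices given to $\Ad$, has $\maxgenreach$ value $1$ but promise value $0$). So your \textsc{NPSpace} algorithm can reject yes-instances and does not establish membership of $\maxgenreach$ in \PSPACE. The paper instead derives the upper bound from the weighted-reachability result of Kupferman and Shenwald \cite[Theorem 12]{KupfermanS24} (their \emph{MaxR} objective with unit weights).

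\textbf{Hardness for $|F_i|=2$.} The clause gadget is unsound. Since the pads $p_j^1,p_j^2,p_j^3$ are shared between branches, the union of the three branches' edges contains a path through \emph{all three} pads: e.g.\ $E_j\to E_j^3\to p_j^1\to p_j^2\to p_j^3\to E_{j+1}$, where $p_j^1\to p_j^2$ comes from branch $3$, $p_j^2\to p_j^3$ from branch $1$ and $p_j^3\to E_{j+1}$ from branch $1$ or $2$. In general the pad-to-pad edges form a tournament on three vertices, which always has a Hamiltonian path compatible with the entry and exit edges, so $\Ev$ collects all $3m$ pair targets regardless of the truth of $\phi$ and the reduction fails. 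Making the pads private to each branch does not repair it, because then a size-$2$ target can no longer be covered automatically by both of the other branches. There is also a structural warning sign: your threshold $k=3m$ equals the total number of targets, so a correct reduction of this shape would prove $\genreach$ with size-$2$ targets \PSPACE-hard -- exactly the open problem the paper emphasises -- and its all-existential specialisation would make the one-player case with $\Ev$ and size-$2$ targets \NP-hard, contradicting its known membership in \P~\cite{fij-horn2010}. The paper's proof genuinely exploits the optimisation aspect: it transplants the $3$-SAT-to-MAX-$2$-SAT gap gadget of \cite{GAREY1976237} into the QBF matrix to obtain \PSPACE-hardness of MAX-$2$-QSAT with a threshold of $7/10$ of the clauses (strictly below all of them), and then applies the standard QSAT-to-game translation.

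\textbf{NP-hardness for singleton targets.} Here you give only a plan: the mechanism forcing $\Ev$'s clause-phase routing to be consistent with her committed assignment is precisely the crux, and you explicitly leave it unconstructed, so this part is not a proof. The paper takes a different and simpler route, reducing from minimum vertex cover: $\Ev$-vertices are the graph vertices, $\Ad$-vertices are the edges, $\Ev$ repeatedly proposes edges, $\Ad$ answers with an endpoint, and the singleton targets are the graph vertices; the value of this game equals the size of a minimum vertex cover.
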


\begin{proof}
(1)     
The upper bound in the general case, i.e, $|F_i|\ge 2$, for all $F_i\in \F$, follows from \cite[Theorem 12]{KupfermanS24}. In fact, they consider the question of maximizing weighted reachability, where different reachability objectives can have different weight associated to them and $\Ev$ attempts to maximise the total weight. Our case corresponds to the case where all the reachability objectives have the same weight, which is called the \textit{MaxR} objective in \cite{KupfermanS24}.

The PSPACE lower bound presented in \cite{KupfermanS24} follows from the PSPACE-hardness of solving generalised reachability game \cite{fij-horn2010}, which works for target sets of size $\ge 3$. We present a hardness proof for the case where target sets are of size $2$.
In the reduction from 3-SAT to MAX-2-SAT~\cite{GAREY1976237}, given a set of 3-CNF clauses, they construct a set of 2-CNF clauses such that exactly $\frac{7}{10}$ of the 2-CNF clauses are satisfied if the original 3-CNF clause was satisfied. Otherwise, at most $\frac{6}{10}$ of the 2-CNF clauses are satisfied. Since this holds for any assignment, we can transform the matrix of the QBF formula from a 3-CNF to a 2-CNF formula such that exactly $\frac{7}{10}$ of the clauses are satisfied by a satisfying assignment. Therefore, this proves that MAX-2-QSAT is PSPACE-hard. Using the construction to translate QSAT to generalised reachability games, we obtain that $\maxgenreach$ with target sets of size $2$ is PSPACE-hard.

(2) We reduce from the minimum vertex cover problem, which is known to be NP-complete~\cite{Karp72}.
Given an undirected graph $G = (V,E)$ and a number $\ell$, the minimum vertex cover problem asks whether $G$ has a vertex cover $S \subseteq V$ of size at most $\ell$. We construct a game as follows: the vertices controlled by $\Ev$, $V_\Ev=V$, correspond to the vertices of the original graph $G$ and the vertices controlled by $\Ad$, $V_\Ad=E$, correspond to the edges of the original graph $G$. 
From any vertex in $V_\Ev$, $\Ev$ can choose any edge $(u,v)$ of the original graph $G$ and move to the vertex corresponding $(u,v)\in V_\Ad$. From a vertex $(u,v)$ in $V_\Ad$, $\Ad$ can move to vertices $u,v\in V_\Ev$, corresponding to the two end points of the edge.
The start vertex is any arbitrary vertex in $V_\Ad$. 
The target set $\F=\{\{v\}~\mid~ v\in V_\Ev\}$ consists of singleton sets containing each vertex of $V$.
We claim that $G$ has a vertex cover of size at most $\ell$ iff $\Ev$ can reach at least $\ell$ targets. 

Consider the following strategy of $\Ev$: assuming any ordering on the set $E$, $\Ev$ chooses the next edge in this order in a round-robin manner. The best response strategy of $\Ad$ against this is to play a minimum vertex cover of $G$.
Also against this strategy of $\Ad$, $\Ev$ cannot do any better since the number of vertices visited in $V_\Ev$ is at most the size of the minimum vertex cover. Hence, the value $k$ of this game is exactly the size of a minimum vertex cover.
This proves our claim and completes the proof of NP-hardness.
\end{proof}

Note that the NP-hardness of the $\maxgenreach$ problem with target sets of size $1$ is in contrast with the results for $\genreach$ problem which is known to be in P for target sets of size $1$.

Next, we state our results for the two-player case of $\maxgenreachpromise$.

\begin{thm}
\label{thm:maxgrp2p}
\phantom{a}
\begin{enumerate}
\item The $\maxgenreachpromise$ problem for the two-player case is in $\P$ when all target sets are singleton.
\item $\maxgenreachpromise$ is \PSPACE-complete in general. It is $\PSPACE$-hard even when all target sets have size $3$.
\end{enumerate}
\end{thm}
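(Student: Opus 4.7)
For part~(1), my plan is to specialise the characterisation underlying Theorem~\ref{thm:genReachSingletons}. When no non-singleton target $F_0$ is present, the witnessing-strategy and spoiling-strategy arguments of that proof simplify to show that $\Ev$ can force visits to every target in a subset $T' \subseteq T$ of singleton targets if, and only if, (i)~the attractors $\{A_t : t \in T'\}$, where $A_t := \AtE(\{t\})$, are totally ordered by set inclusion, and (ii)~the start vertex $s$ lies in the minimum element of this chain; condition~(3) of Theorem~\ref{thm:genReachSingletons} simply disappears together with $F_0$. Consequently, $\maxgenreachpromise$ reduces to the purely combinatorial task of finding, among the targets $t \in T_s := \{t \in T : s \in A_t\}$, the largest subset whose attractors are pairwise comparable under inclusion.

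This task can be solved in polynomial time. I would first compute each $A_t$ in time $\mathcal{O}(|E|)$; discard those $t$'s for which $s \notin A_t$; group the remaining targets by equality of their attractor (targets sharing an attractor can always be bundled into the same chain, since $t_j \in A_{t_j} = A_{t_i}$ lets $\Ev$ attract from $t_i$ to $t_j$ and back); then, on the DAG whose nodes are the distinct attractors weighted by their multiplicity and whose edges go from $A$ to $A'$ whenever $A \subsetneq A'$, compute a maximum-weight path by dynamic programming. This is all clearly polynomial.

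For part~(2), the \PSPACE~upper bound is immediate by guess-and-verify: nondeterministically guess a subset $\F' \subseteq \F$ with $|\F'| \geq k$ (polynomially many bits suffice) and then invoke the \PSPACE~decision procedure for $\genreach$ on $(\Aa, s, \F')$ from Fijalkow and Horn~\cite{fij-horn2010}; Savitch's theorem closes the bound. For \PSPACE-hardness with target sets of size~$3$, I reduce trivially from $\genreach$: given a $\genreach$ instance $\game$, the $\maxgenreachpromise$ instance $(\game, |\F|)$ asks exactly whether $\Ev$ can pledge \emph{all} of $\F$, which coincides with the original $\genreach$ question. The \PSPACE-hardness of $\genreach$ at target sets of size~$3$~\cite{fij-horn2010} thus transfers directly to $\maxgenreachpromise$ at the same target size.

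The main delicate point is the specialisation of Theorem~\ref{thm:genReachSingletons} in part~(1): I need to verify that the strategy constructions still compose correctly when several targets share the same attractor, relying on the observation that $t_i \in A_{t_j}$ whenever $A_{t_i} \subseteq A_{t_j}$, and that $\Ad$'s spoiling strategy (stay outside $A_{t_j}$ after entering $t_i$ with $A_{t_i},A_{t_j}$ incomparable, or simply avoid the minimum attractor when $s$ is not in it) remains valid in the absence of the $F_0$ set. Everything else — attractor computation, longest weighted path in a DAG, guess-and-verify with a \PSPACE~subroutine, and the trivial reduction from $\genreach$ — is routine.
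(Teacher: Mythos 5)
Your proposal is correct and takes essentially the same route as the paper: for part~(1) the paper also reduces to the attractor-inclusion characterisation (its preorder $t_i \preceq t_j$ iff $t_i \in \AtE(t_j)$, whose SCCs are exactly your groups of targets with equal attractors, with the restriction to paths starting at the SCC of $s$ playing the role of your discarding of targets $t$ with $s \notin \AtE(\{t\})$) and then computes a maximum-weight path in the resulting weighted DAG by dynamic programming. For part~(2) the paper gives the identical argument: guess the pledged subset, run the \PSPACE{} procedure for $\genreach$, apply Savitch, and obtain hardness for size-$3$ targets by viewing $\genreach$ as the special case $k=|\F|$.
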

\begin{proof}

We give a polynomial time algorithm for the case when all target sets are singleton.
The algorithm below is similar to the generalised reachability case and proceeds as follows:

\begin{enumerate}
    \item Compute the attractor relation between target vertices %
    forming a preorder graph capturing these relationships. Formally, with $t_0 = s$, consider the relation $t_i \preceq t_j$ iff $t_i \in \AtE(t_j)$, for $i\in \{0\}\cup [n]$ and create a graph with $t_i$ as vertices and edges from $t_i$ to $t_j$ iff $t_i \preceq t_j$. Call this the preorder graph.
    \item Perform the Strongly Connected Component (SCC) decomposition of this preorder graph.
    \item Assign a weight to each SCC equal to the number of target vertices contained within it.
    \item The resulting SCC decomposition is a Directed Acyclic Graph (DAG). Using bottom-up dynamic programming, find a path in this DAG with the maximum total %
    weight starting from the SCC containing $t_0$.
\end{enumerate}

The set of targets on this path from $t_0$ can be visited because, by construction, the $\Ev$-attractors of the targets are ordered by inclusion.

At the same time, any target set $\Ev$ can promise must have target states whose $\Ev$ attractors are ordered by inclusion; they are therefore on a path from $t_0$ in this DAG.
Thus, there cannot be a larger such set, as it would lead to a path with a larger weight.

This shows that our dynamic programming algorithm works. 
For the complexity, we note that each step can be performed efficiently:
     Attractor computation can be done in \( O(|V| + |E|) \). SCC decomposition can also be computed in \( O(|V| + |E|) \) \cite{SHARIR198167}.
     Assigning weights to SCCs and constructing the weighted DAG is linear in the graph size.
     Finally, computing the maximum-weight path in a DAG via bottom-up dynamic programming is also linear in the size of the DAG obtained via SCC decomposition. Since each step is polynomial, the overall algorithm solves the problem $\maxgenreachpromise$ in polynomial time when each target set is singleton.

\bigskip

For the general case with target sets of arbitrary size, we obtain a \NPSPACE~algorithm by simply guessing the target sets promised by $\Ev$ and then applying the \PSPACE~algorithm for solving generalised reachability games from \cite{fij-horn2010}. As a consequence of Savitch's Theorem, this gives a \PSPACE~upper bound for the general problem. The $\PSPACE$ lower bound holds even for target sets of size $3$ as the generalised reachability game with the objective to visit all targets is a special case of the $\maxgenreachpromise$ problem.
\end{proof}
We leave the exact complexity of $\maxgenreachpromise{}$ for the case with target sets of size $2$ open; Theorem \ref{thm:maxgrev} provides an \NP~lower bound.

\section{Discussion}

We have studied variations of games with generalised reachability objectives and maximum generalised reachability objectives, considering the size of target sets as parameter.
We have provided several complexity results for these two problems, showing first
that generalised reachability can be checked in time linear in (1) the size of the game and (2) the number of singleton target sets, and (3) exponential in the number of larger target sets.
This extends the polynomial time complexity from the case where all target sets are singleton~\cite{fij-horn2010} to allow for a logarithmic number of target sets of arbitrary size.
We have then established NL-completeness for the single-player case, where only the environment makes decisions.

Next, we have introduced optimisation variants of the generalised reachability problem, where the goal generalises from visiting all target sets to visiting as many target sets as possible.
The first natural goal here is to just \emph{maximise} the number of target sets visited; we show that this problem is different in that it is \textsc{NP}-hard, even when each target set contains only a single element, and \textsc{PSPACE}-complete even when the size of target sets is restricted to at most two.
We also show that both single player variants of this problem are tractable if the target sets are singleton, but become intractable already for games with target sets of size two over DAGs with pathwidth two.

We also considered an interesting variant, where $\Ev$ is asked to pledge -- before the game starts -- the target sets that are to be visited.
She then has to visit them all, and her goal shifts to pledging a largest set of target sets.
We show that the solution of games with such objectives is tractable for singleton target sets even in the two-player case. Another interesting variant could require $\Ev$ to specify the order in which the target sets are visited. This variant introduces additional constraints and may not be reducible to the cases considered in this work.

Thus, we clarify the landscape of complexity of several problems in generalised reachability. The major remaining open problem is the complexity of the generalised reachability problem where every target set is of size at most two~\cite{fij-horn2010}. 
The promise variant, $\maxgenreachpromise$ problem with target sets of size $2$, is shown to be NP-hard (even when $\Ev$ is the only player). Since $\genreach$ can be seen as a special case of $\maxgenreachpromise$, this has consequences for the exact complexity of $\genreach$ with target sets of size $2$. In this case, a polynomial time algorithm for $\genreach$ would imply $\maxgenreachpromise$ is harder (unless P=NP). On the other hand, an upper bound better than PSPACE for $\maxgenreachpromise$ would also imply an improved upper bound for $\genreach$ in the case of target sets of size $2$.

Regarding the optimisation variant $\maxgenreach$, the exact complexity of the case with singleton sets remains open as we have shown it to be NP-hard and in PSPACE.

\nocite{}
\bibliographystyle{eptcs} 
\bibliography{references}
\end{document}